 \DeclareMathAlphabet{\mathpzc}{OT1}{pzc}{m}{it}
\def\hksqrt{\mathpalette\DHLhksqrt}
\def\DHLhksqrt#1#2{\setbox0=\hbox{$#1\sqrt{#2\,}$}\dimen0=\ht0
\advance\dimen0-0.2\ht0
\setbox2=\hbox{\vrule height\ht0 depth -\dimen0}%
{\box0\lower0.4pt\box2}}
\newcommand{\BibPath}{/home/milan/Documents/University/Research/BibTeX}
\newcommand{\Rl}{\mathbb{R}}
\newcommand{\Nl}{\mathbb{N}}
\newcommand{\fspace}{\;\;\;\;}
\newcommand{\eq}{\triangleq}
\newcommand{\sigsq}{\sigma^{2}}
\newcommand{\w}{\omega}
\newcommand{\Q}{\mathcal{Q}}
\newcommand{\abs}[1]{\left| #1\right|}
\newcommand{\set}[1]{\{ #1 \}}
\newcommand{\setFrTo}[3]{\set{#1}_{#2}^{#3}}
\newcommand{\setFrTok}[3]{\setFrTo{#1}{k=#2}{#3}}
\newcommand{\Set}[1]{\left\{ #1 \right\}}
\newcommand{\sumfromto}[2]{\sum\nolimits_{#1}^{#2}}
\newcommand{\Sumfromto}[2]{\sum\limits_{#1}^{#2}}
\newcommand{\intfromto}[2]{\int_{#1}^{#2}}
\newcommand{\expo}[1]{\textrm{e}^{#1}}
\newcommand{\aeonpipi}{\textrm{a.e. on }[-\pi,\pi]}
\newcommand{\forallwinpipi}{\, \forall \w \in [-\pi,\pi]}
\newcommand{\tr}{\textrm{tr}}
\newcommand{\diag}{\textrm{diag}}
\newcommand{\Bsp}{\mathcal{B}}
\newcommand{\Gsp}{\mathcal{G}}
\newcommand{\Usp}{\mathcal{U}}
\newcommand{\Nsp}{\mathcal{N}}
\newcommand{\by}{\boldsymbol{y}}
\newcommand{\bz}{\boldsymbol{z}}
\newcommand{\bA}{\boldsymbol{A}}
\newcommand{\bI}{\boldsymbol{I}}
\newcommand{\bK}{\boldsymbol{K}}
\newcommand{\bM}{\boldsymbol{M}}
\newcommand{\bQ}{\boldsymbol{Q}}
\newcommand{\bT}{\boldsymbol{T}}
\newcommand{\bW}{\boldsymbol{W}}
\newcommand{\bzero}{\boldsymbol{ 0 }}
\newtheorem{thm}{\textbf{Theorem}}
\newtheorem{defn}{\textbf{Definition}}
\newtheorem{lem}{\textbf{Lemma}}
\newtheorem{rem}{Remark}
\begin{document}
\title{The Quadratic Gaussian Rate-Distortion Function for Source Uncorrelated Distortions}


\author{Milan S. Derpich, Jan {\O}stergaard, and Graham C. Goodwin\\
\small{The University of Newcastle, NSW, Australia}\\
\footnotesize{milan.derpich@studentmail.newcastle.edu.au, \{jan.ostergaard,graham.goodwin\}@newcastle.edu.au }}

\date{ }

\maketitle
\thispagestyle{empty}
\begin{abstract}
 We characterize the rate-distortion function for zero-mean stationary Gaussian sources under the MSE fidelity criterion and subject to the additional constraint that the distortion is uncorrelated to the input. 
The solution is given by two equations coupled through a single scalar parameter.
This has a structure similar to the well known water-filling solution obtained without the uncorrelated distortion restriction.
Our results fully characterize the unique statistics of the optimal distortion.
We also show that, for all positive distortions, the minimum achievable rate subject to the uncorrelation constraint is strictly larger than that given by the un-constrained rate-distortion function.
This gap increases with the distortion and tends to infinity and zero, respectively, as the distortion tends to zero and infinity.
\end{abstract}

\section{Introduction}
Many 
lossy source coding schemes have the property that the end-to-end reconstruction error is uncorrelated with  the source.
We refer to such schemes as \emph{uncorrelated distortion} (UD) coders.
As an example, consider a typical transform coder, as depicted in Fig.~\ref{fig:transform}.
Here, a random vector $X\in \mathbb{R}^N$ is first transformed by an analysis transform $\bT\in \mathbb{R}^{N\times N}$ to yield $U=\bT X$. 
Then $U$ is quantized, yielding the vector, $\hat{U} =\Q(U)$. 
The input signal is finally approximated by $Y=\tilde{\bT}\hat{U}$, where $\tilde{\bT}\in \mathbb{R}^{N\times N}$ is the synthesis transform, cf.~\cite{gergra01,goyal-01}. 
%
\begin{figure}[thb]
\psfrag{x}{$X$}
\psfrag{y}{$Y$}
\psfrag{u}{\hspace{-0.2mm}$U$}
\psfrag{uh}{\hspace{0.9mm}$\hat{U}$}
\psfrag{Q}{\raisebox{-0.2mm}{\hspace{-0.8mm}$\Q$}}
\psfrag{T}{\raisebox{-0.4mm}{\hspace{0mm}$T$}}
\psfrag{Tt}{\raisebox{-0.5mm}{\hspace{0.2mm}$\tilde{T}$}}
\begin{center}
\includegraphics{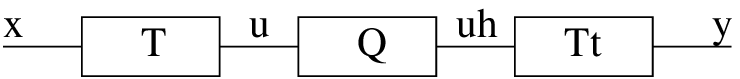}
\caption{Transform coder.}
\label{fig:transform}
\end{center}
\end{figure}
If the quantization error $E \eq \hat{U}-U$ is uncorrelated to $U$, and  if $\bT\tilde{\bT}=\bI$, then it is easy to show that $Y-X$ is uncorrelated to $X$, thus yielding a UD coder. 

More generally, any quantization scheme satisfying the following two properties constitutes a UD coder:
a) The  error introduced by the quantizer is uncorrelated to its input; b) The linear processing (if any) before and after the quantizer yields perfect reconstruction (PR) in the absence of quantization errors. 
%
Property a)
is satisfied in many cases, e.g.\ in high-resolution coding~\cite{marneu05} or when a quantizer with dither (either subtractive~\cite{zamfed96} or non-subtractive~\cite{grasto93})  is employed.
%
On the other hand, the PR condition (Property b)) is often imposed (sometimes implicitly) in the design of filter banks~\cite{vaidya93}, transform coders~\cite{gergra01,goyal-01}, and feedback quantizers~\cite{norsch97,dersil08}.
Thus, any PR source coder using, for example, subtractively dithered quantization, is a UD coder.
The rate-distortion performance of any UD coder can be compared to the underlying Shannon's rate-distortion function $R(D)$ of the source, for a given distortion metric. 
%
%
One may question whether such a comparison is, in fact, fair. 
After all, the additional constraint that the end-to-end distortion is uncorrelated with  the source is not imposed upon $R(D)$. With this in mind, let $R^\perp(D)$ denote the rate distortion function with the additional constraint that the end-to-end distortion is uncorrelated to the source.
(A formal definition of $R^{\perp}(D)$ is given in Section~\ref{sec:UQRDF}). 
Clearly, $R^\perp(D) \!\geq\! R(D)$.%
\footnote{Enforcing additional constraints can never increase the achievable rate region for a given $D$. 
Thus, since the achievable rate region is lower bounded by $R^\perp(D)$ we must have $R^\perp(D)\!\geq\! R(D)$.} 
However, to the best of the authors' knowledge, the problem of characterizing $R^\perp(D)$ has not been formally addressed before.
%
Therefore, questions such as
in which cases (if any) 
$R^\perp(D)$ equals $R(D)$, and
 how $R^\perp(D)$ can be achieved,
appear to be unanswered.


In this paper, we not only give conclusive answers to the above questions, but more importantly, we
completely characterize $R^\perp(D)$ for the quadratic Gaussian case%
\footnote{By quadratic Gaussian we refer to the case of Gaussian sources with the MSE fidelity criterion. 
Moreover, we restrict our attention to zero-mean sources.}  
as a lower bound for the rate achievable under the uncorrelated distortion constraint%
\footnote{A proof of achievability has recently been derived by the authors in~\cite{derost08b}.}.
We show, in Section~\ref{sec:UQRDF}, that $R^\perp(D)$ can be parameterized through a single scalar variable $\alpha\!>\!0$. 
This is a result which parallels the conventional water-filling equations that describe $R(D)$.
%
We characterize the unique optimal statistics that the reconstruction error $Y\!\!-\!\!X$ needs to have in order to achieve $R^\perp(D)$, for a given Gaussian source $X$.
In particular, we show that $Y\!\!-\!\!X$ must be Gaussian.
In addition, we recast the results in a transform coding sense. 
More precisely, we show that if the quantization errors are Gaussian, independent both mutually and from the source, then  the \emph{Karhunen-Lo\`eve Transform} (KLT) is optimal among all perfect reconstruction transforms,  at all rates.%
\footnote{The optimality of a KLT has previously been established by a number of authors in a variety of settings, cf.~\cite{gergra01,huang-63,goyal-00}. However, this appears to be the first time that this result is proven explicitly for $R^\perp(D)$.}
%
%
A comparative analysis between $R^{\perp}(D)$ and $R(D)$ is then presented in Section~\ref{sec:Comparison}.
There we show that $R^\perp(D)$ is convex and monotonically decreasing in $D$, and that 
$R^\perp(D)\!>\!R(D), \forall D\!>\!0$, converging in the limit as $R\rightarrow \infty$.
Furthermore, we show that $R^\perp(D)\!\!\rightarrow\!\! 0\! \Leftrightarrow \! D\rightarrow\! \infty$, which is different from the well known result $R(D)\!=0\! \Leftrightarrow \! D\!\geq\! \sigma_X^2$.%
\footnote{Notice that, in the case of a vanishingly small positive coding rate,  we cannot simply reconstruct the source using its statistical mean (as we would do in a traditional water-filling solution to $R(D)$) since this will lead to linear distortion, clearly correlated to $X$ (since it is a linear function of $X$).}
%

It is worth emphasizing that our results are not tied to any particular source coding architecture,
but are general in the sense that \emph{any} coding scheme 
in which the end-to-end distortion is uncorrelated with the source can do no better than $R^\perp(D)$.

\paragraph*{Notation}
We use uppercase letters to represent random vectors, adding a subscript when referring to one of its elements, i.e., $X_{i}$ is the $i$-th element of the random vector $X$.
The expectation operator is denoted by $\mathbb{E}[\cdot]$.
Uppercase bold letters are used for matrices.
The positive-definite square root of a positive-definite matrix $\bM$ is denoted by $\hksqrt{\bM}$.
We write $\abs{\bM}$ and $\tr(\bM)$ for the determinant and the trace of a matrix $\bM$, respectively.
The \emph{probability density function} (PDF) and covariance matrix of a random (column) vector $X$ are denoted respectively by $f_{X}$ and $\bK_{X}\eq \mathbb{E}[X X^{T}]$, where $X^{T}$ is the transpose of $X$.
We write $\bK_{X,Y}$ for the cross-covariance matrix between two random vectors $X$ and $Y$.
The spectrum of a w.s.s. random process $Z$ with autocorrelation function $R_{Z}[k]\eq \mathbb{E}[Z_{i}Z_{i+k}]$ is denoted by $S_{Z}(\w)\eq\sumfromto{k=-\infty}{\infty}R_{Z}[k]\expo{-jk \w }$, $\forallwinpipi$.
The differential entropy and the differential entropy per dimension of an $N$-length random vector $X$ are denoted, respectively, by $h(X)$ and $\bar{h}(X)\eq \frac{1}{N}h(X)$.
When $X$ is a random process, $\bar{h}(X)\eq \lim_{N\to\infty}\frac{1}{N} h(X_{1},X_{2},\ldots, X_{N})$ denotes the differential entropy rate of $X$.
We use $I(X;Y)$  and $\bar{I}(X;Y)\eq \frac{1}{N}I(X;Y)$ to refer, respectively, to the mutual information and the mutual information per dimension between two random vectors $X$ and $Y$.
When $X$ and $Y$ are random processes , $\bar{I}(X;Y)\eq \lim_{N\to\infty}\frac{1}{N}I(X_{1},\ldots,X_{N} \,;\, Y_{1},\ldots,Y_{N})$ denotes the mutual information rate between $X$ and $Y$.
We write $a.e.$ for ``almost everywhere''.

%



\section{Rate-Distortion Function with Uncorrelated Distortion }\label{sec:UQRDF}
%
We begin by formalizing the definition of the quadratic rate-distortion function under the constraint that the end-to-end distortion be uncorrelated with  the source.
Then, in Section~\ref{subsec:RD_for_Vect},  we characterize this function for Gaussian random vectors,
deferring the case of Gaussian stationary processes to Section~\ref{subsec:RDperp_Rand_Proc}.
\begin{defn}
{The uncorrelated quadratic rate-distortion function $R^{\perp}(D)$ for a random vector (source) $X\in\Rl^{N}$ is defined as}
%
\begin{align}
 	R^{\perp}(D) \eq \min_{
		Y\,:
		\mathbb{E}[X(Y-X)^{T}]= \bzero,\,
		\frac{1}{N}\tr(\bK_{Y-X})\leq D,\,
		\abs{\bK_{X-Y}}>0}
	\bar{I}(X;Y)\label{eq:RperpD_func_def},
\end{align}
where $Y$ is an $N$-length random vector.
%
\end{defn}

\subsection{$R^{\perp}(D)$ for Gaussian Random Vector Sources}\label{subsec:RD_for_Vect}
%
We now present one of the main results of this paper, namely that, for Gaussian vector sources, $R^{\perp}(D)$ is given by two equations linked through a single scalar parameter.
This resembles the ``water-filling'' equations that describe $R(D)$.
The proof of this result, which is presented in Theorem~\ref{thm:RD_Kx}, makes use of the following lemma.
\begin{lem}\label{lem:Z_must_be_Gaussian}
 Let $X\in\Rl^{N}\sim\Nsp(\bzero,\bK_{X})$. Let $Z\in\Rl^{N}$ and $Z_{G}\in\Rl^{N}$ be two random vectors with zero mean and the same covariance matrix, i.e., $\bK_{Z}=\bK_{Z_{G}}$, and having the same cross-covariance matrix with respect to $X$, that is, $\bK_{X,Z}=\bK_{X,Z_{G}}$.
If $Z_{G}$ and $X$ are jointly Gaussian, and if $Z$ has any distribution, then
%
\begin{align}
 I(X;X+Z) \geq I(X;X+Z_{G}).\label{eq:Iineq}
\end{align}
If furthermore  $\abs{\bK_{X+Z}}=\abs{\bK_{X+Z_{G}}}>0$, then equality is achieved in~\eqref{eq:Iineq} iff $Z\sim\Nsp(\bzero,\bK_{Z})$ with $Z$ and $X$ being jointly Gaussian.
\end{lem}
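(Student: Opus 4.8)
The plan is to prove a sharper statement from which the lemma follows: among all joint laws of $(X,V)$ with a prescribed Gaussian $X$-marginal and a prescribed joint covariance matrix, the jointly Gaussian one minimizes $I(X;V)$. The lemma is then obtained by taking $V\eq X+Z$ and $V\eq X+Z_{G}$, since $\mathbb{E}[X]=\mathbb{E}[V]=\bzero$ and the joint covariance of $(X,X+Z)$ is completely determined by $\bK_{X}$, $\bK_{X,Z}$ and $\bK_{Z}$, hence coincides with that of $(X,X+Z_{G})$. I assume $\bK_{X}\succ\bzero$ and, for the equality part, $\abs{\bK_{X+Z}}>0$ as hypothesized; the general inequality follows by the same argument restricted to the range of $\bK_{X+Z}$ (i.e.\ using a pseudo-inverse).

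Write $I(X;V)=h(X)-h(X\mid V)$. Since $h(X)=\tfrac12\log\bigl((2\pi e)^{N}\abs{\bK_{X}}\bigr)$ is fixed, it suffices to upper-bound $h(X\mid V)$ by its value in the Gaussian case. Let $\bM\eq\bK_{X,V}\bK_{V}^{-1}$ be the linear MMSE matrix for estimating $X$ from $V$, and set $U\eq X-\bM V$, so that $\bK_{U}=\bK_{X}-\bK_{X,V}\bK_{V}^{-1}\bK_{V,X}\eq\Sigma$, a quantity depending only on second-order moments. Using, in turn, translation-invariance of conditional entropy, the fact that conditioning cannot increase entropy, and the Gaussian maximum-entropy property,
\begin{align}
h(X\mid V)=h(U\mid V)\leq h(U)\leq \tfrac12\log\bigl((2\pi e)^{N}\abs{\Sigma}\bigr).\label{eq:chain}
\end{align}
In the jointly Gaussian case $U$ is Gaussian and independent of $V$, so both inequalities in~\eqref{eq:chain} are equalities and $h(X\mid V_{G})=\tfrac12\log\bigl((2\pi e)^{N}\abs{\Sigma}\bigr)$. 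Hence $h(X\mid V)\leq h(X\mid V_{G})$, i.e.\ $I(X;V)\geq I(X;V_{G})$, which is~\eqref{eq:Iineq}.

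For the equality statement, one direction is immediate: if $Z$ and $X$ are jointly Gaussian with $\mathbb{E}[Z]=\bzero$, then $(X,Z)$ and $(X,Z_{G})$ have the same distribution and~\eqref{eq:Iineq} holds with equality. Conversely, equality in~\eqref{eq:Iineq} forces equality throughout~\eqref{eq:chain}, so $U$ is Gaussian \emph{and} independent of $V$. From $X=\bM V+U$ with $U\perp V$ and characteristic functions, $\phi_{X}(t)=\phi_{V}(\bM^{T}t)\,\phi_{U}(t)$; since $\phi_{U}$ never vanishes and $X,U$ are zero-mean Gaussian, $\phi_{V}(\bM^{T}t)=\exp\!\bigl(-\tfrac12\,t^{T}(\bK_{X}-\Sigma)t\bigr)=\exp\!\bigl(-\tfrac12\,(\bM^{T}t)^{T}\bK_{V}(\bM^{T}t)\bigr)$. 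When $\bM=\bK_{X,V}\bK_{V}^{-1}=(\bK_{X}+\bK_{X,Z})\bK_{X+Z}^{-1}$ is invertible, $\bM^{T}t$ ranges over all of $\Rl^{N}$, so $V\sim\Nsp(\bzero,\bK_{V})$; then $(V,U)$ are independent Gaussian vectors, $(X,V)=(\bM V+U,V)$ is their affine image and hence jointly Gaussian, and so $(X,Z)=(X,V-X)$ is jointly Gaussian with $Z\sim\Nsp(\bzero,\bK_{Z})$.

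The main obstacle is the equality direction: squeezing out, from a single scalar equality of mutual informations, the three simultaneous tightness conditions in~\eqref{eq:chain}, and then upgrading ``$X$ is, conditionally on $V$, Gaussian with affine mean'' to genuine joint Gaussianity of $(X,V)$. The latter rests on invertibility of $\bM$, equivalently of $\bK_{X}+\bK_{X,Z}$; this holds in the intended application, where $Z=Y-X$ satisfies $\bK_{X,Z}=\bzero$ and $\bK_{X}\succ\bzero$, so $\bM=\bK_{X}\bK_{X+Z}^{-1}$ is invertible. Some extra care is also needed if $\Sigma$ or $\bK_{X+Z}$ is singular, in which case the entropies in~\eqref{eq:chain} may equal $-\infty$ and the argument should be carried out on the appropriate subspace.
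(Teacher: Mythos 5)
Your proof is correct, and it takes a genuinely different route from the paper's. The paper writes $I(X;X+Z)-I(X;X+Z_{G})=h(Z_{G}|Y_{G})-h(Z|Y)$ and, exploiting that $\log f_{Z_{G}|Y_{G}}(\bz|\by)$ is a quadratic form whose expectation is fixed by the (shared) second moments, identifies the gap \emph{exactly} as $\int f_{Y}(\by)\,D(f_{Z|Y=\by}\Vert f_{Z_{G}|Y_{G}=\by})\,d\by\geq 0$ --- a Diggavi--Cover-style divergence argument in which the equality case reads off as ``the conditional densities coincide.'' You instead go through the LMMSE residual $U=X-\bM V$ and the chain $h(X|V)=h(U|V)\leq h(U)\leq\tfrac12\log\bigl((2\pi e)^{N}\abs{\Sigma}\bigr)$, tight precisely in the jointly Gaussian case, and then recover joint Gaussianity in the converse by deconvolving $\phi_{X}(t)=\phi_{V}(\bM^{T}t)\phi_{U}(t)$. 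Your route is more elementary on the inequality (no manipulation of joint densities, no need to justify the interchange in the paper's step $(a)$) and is more explicit on the converse, which the paper essentially asserts once it reaches equality of the conditional densities; the paper's route, in exchange, yields a quantitative identity for the gap as an averaged relative entropy. The invertibility of $\bM=\bK_{X,V}\bK_{V}^{-1}$ that you flag is a genuine requirement of the equality statement rather than a defect of your argument: it holds automatically in the intended application ($\bK_{X,Z}=\bzero$, $\bK_{X}\succ\bzero$, so $\bM=\bK_{X}\bK_{X+Z}^{-1}$), and the paper's own passage from equal conditional densities to joint Gaussianity silently needs the same nondegeneracy (if $\bK_{X,Z}=-\bK_{X}$, so that $X+Z$ is uncorrelated with $X$, one can achieve equality with non-Gaussian $Z$, so the ``only if'' direction genuinely fails without it). One trivial nit: your chain has two inequalities, not three tightness conditions.
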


\begin{proof}
Define $Y\eq X+Z$ and $Y_{G} \eq X+Z_{G}$. Then
{\allowdisplaybreaks
\begin{align}
 &I(X;X+Z) - I(X;X+Z_{G}) 
 = h(X|Y_{G}) - h(X|Y) = h(Z_{G}|Y_{G}) - h(Z|Y) \nonumber\\
& \overset{\hphantom{(a)}}{=} 
- \iint f_{Z_{G},Y_{G}}(\bz,\by)\log( f_{Z_{G}|Y_{G}}(\bz|\by))d\bz d\by
    + \iint f_{Z_{},Y_{}}(\bz,\by)\log( f_{Z_{}|Y_{}}(\bz|\by))d\bz d\by \nonumber\\
& \overset{{(a)}}{=} 
- \iint f_{Z_{},Y_{}}(\bz,\by)\log( f_{Z_{G}|Y_{G}}(\bz|\by))d\bz d\by
    + \iint f_{Z_{},Y_{}}(\bz,\by)\log( f_{Z_{}|Y_{}}(\bz|\by))d\bz d\by \nonumber\\
%
%
& \overset{\hphantom{(a)}}{=}
  \int f_{Y}(\by)\int f_{Z_{}|Y_{}}(\bz|\by)\log\left( \frac{f_{Z_{}|Y_{}}(\bz|\by)}{f_{Z_{G}|Y_{G}}(\bz|\by)}\right)d\bz d\by \nonumber \\
& \overset{\hphantom{(a)}}{=}
  \int f_{Y}(\by)D(f_{Z|Y=\by} \Vert f_{Z_{G}|Y_{G}=\by}) d\by 
%
\overset{\hphantom{(a)}}{\geq} 0,\label{eq:diverg_ineq}
\end{align}
}
where $D(f\Vert g)$ is the relative entropy (or \emph{Kullback-Leibler distance}) between the two probability density functions $f$ and $g$.
The equality $(a)$ follows from the fact that $\log(f_{Z_{G}|Y_{G}}(\bz|\by))$ is a quadratic form of $\bz$ and $\by$, and from the fact that $\bK_{Z,Y}=\bK_{Z_{G},Y_{G}}$.
The inequality follows from the fact that $D(f\Vert g)\geq 0$, with equality iff $f=g$.
Thus, equality is achieved iff $f_{Z_{G}|Y_{G}=\by} = f_{Z_{}|Y=\by}$ for all $\by$ such that $f_{Y}(\by)>0$.
The proof is completed by noting that $\abs{\bK_{X+Z}}=\abs{\bK_{X+Z_{G}}}>0$ implies $f_{Y}(\by)>0$ for all $\by\in\Rl^{N}$.
\end{proof}

\begin{rem}
 We note that the above Lemma generalizes Lemma II.2 in~\cite{digcov01}, by relaxing 
the requirement that $f_{Z|X}=f_{Z}$ and  $f_{Z_{G}|X}=f_{Z_{G}}$, to the requirement $\bK_{X,Z}=\bK_{X,Z_{G}}$.
\end{rem}

We are now in a position to present the main result of this section:
%
\begin{thm}\label{thm:RD_Kx}
 %
{Let the source $X\in\Rl^{N}$ be a zero mean Gaussian random vector with positive-definite covariance matrix $\bK_{X}$, having eigenvalues 
$\set{\lambda_{k}}_{k=1}^{N}$.
Then}
\begin{itemize}
 \item[(i)] 
{For any positive $D$,}
%
\begin{align}
 	R^{\perp}(D) 
& = \frac{1}{N}\sumfromto{k=1}{N} \log\left(\frac{\hksqrt{\lambda_{k} +\alpha } + \hksqrt{\lambda_{k}}}{\hksqrt{\alpha}}\right),\label{eq:RperpD_vec} \\
%
\!\!\!\!\!\!\!\!\!\!\!\!\!\!\!\!\!\!\!\!\!\!\!\!\!\!\!\!\!\!\!\!\!\!\!\!\!\!\!\!
\!\!\!\!\!\!\!\!\!\!\!\!\!\!\!\!\!\!\!\!\!\!\!\!\!\!\!\!\!\!\!\!\!\!\!\!\!\!\!\!
\textrm{
where the scalar parameter }&\textrm{$\alpha\in\Rl^{+}$ is such that} \nonumber\\
D 
&= \frac{1}{2N}\sumfromto{k=1}{N} \hksqrt{\lambda_{k}^{2} + \lambda_{k}\alpha  } -  \lambda_{k} \; , \forall D>0.\label{eq:D_of_alpha_vector}
\end{align}
%
\item[(ii)] 
{For each $D>0$, the value of $\alpha$ that satisfies~\eqref{eq:D_of_alpha_vector} is \emph{unique}.}
%
\item[(iii)]
{Let $Y$ satisfy $K_{X(Y-X)}=\bzero$, } $\tr(\bK_{Y-X})/N \leq D$ and $\abs{\bK_{Y-X}}>0$.
{Then}  
$\bar{I}(X;Y) = R^{\perp}(D)$ 
{iff}  $Z\eq (Y-X)\sim\Nsp(\bzero,\bK_{Z^{\star}})$, 
{with}
%
%
\begin{align}\label{eq:Kz_star_def}
 	\bK_{Z^{\star}} \eq  \frac{1}{2}\hksqrt{\bK_{X}^{2} + \alpha \bK_{X} } - \frac{1}{2} \bK_{X},
\end{align}
%
{and where $\alpha$ satisfies~\eqref{eq:D_of_alpha_vector}.}
\end{itemize}
\end{thm}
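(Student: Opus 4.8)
The plan is to reduce the vector problem to a diagonal one, solve the scalar optimization, and then invoke Lemma~\ref{lem:Z_must_be_Gaussian} to pin down both the optimal value and the uniqueness of the optimal distortion. First I would diagonalize $\bK_{X}$: write $\bK_{X} = \bU \bLambda \bU^{T}$ with $\bU$ orthogonal and $\bLambda = \diag(\lambda_{1},\dots,\lambda_{N})$, and replace $X$ by $\tilde X = \bU^{T} X$, $Z$ by $\tilde Z = \bU^{T} Z$. Since mutual information, the trace constraint, the determinant constraint, and the uncorrelatedness constraint $\bK_{X,Z}=\bzero$ are all invariant under this orthogonal change of coordinates, it suffices to work with a source having diagonal covariance. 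The next step is to lower-bound $\bar I(X;Y)$ over all admissible $Z$. By Lemma~\ref{lem:Z_must_be_Gaussian}, for any candidate $Z$ with given $\bK_{Z}$ and given (here zero) cross-covariance with $X$, replacing $Z$ by a jointly Gaussian $Z_{G}$ with the same second-order statistics can only decrease $I(X;X+Z)$; hence the minimization may be restricted to jointly Gaussian $(X,Z)$ with $\bK_{X,Z}=\bzero$, i.e. to $Y=X+Z$ with $Z\sim\Nsp(\bzero,\bK_{Z})$ independent of $X$ and $\tr(\bK_{Z})\le ND$.

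For such jointly Gaussian pairs, $\bar I(X;Y) = \frac{1}{2N}\log\frac{\abs{\bK_{X}+\bK_{Z}}}{\abs{\bK_{Z}}}$. Now I would argue that it is optimal to take $\bK_{Z}$ diagonal in the eigenbasis of $\bK_{X}$: fixing the trace of $\bK_{Z}$, the objective $\log\abs{\bK_{X}+\bK_{Z}} - \log\abs{\bK_{Z}}$ is minimized, by a Hadamard/majorization argument (or by noting that zeroing the off-diagonal blocks does not change $\tr\bK_{Z}$ while $\abs{\bK_{Z}}$ cannot decrease and $\abs{\bK_{X}+\bK_{Z}}$ cannot increase — this is the step needing care), when $\bK_{Z}$ shares the eigenvectors of $\bK_{X}$. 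Writing $\bK_{Z}=\diag(d_{1},\dots,d_{N})$, the problem becomes
\begin{align}
 \min_{d_{k}>0,\ \sum_{k} d_{k}\le ND}\ \frac{1}{2N}\sumfromto{k=1}{N}\log\frac{\lambda_{k}+d_{k}}{d_{k}}.\nonumber
\end{align}
A Lagrange-multiplier (KKT) computation: at an interior optimum, $\frac{\partial}{\partial d_{k}}\log\frac{\lambda_{k}+d_{k}}{d_{k}} = \frac{1}{\lambda_{k}+d_{k}}-\frac{1}{d_{k}} = -\frac{\lambda_{k}}{d_{k}(\lambda_{k}+d_{k})}$ must be a constant $-1/\mu$ independent of $k$, i.e. $d_{k}(\lambda_{k}+d_{k}) = \mu\lambda_{k}$. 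Solving this quadratic for $d_{k}>0$ gives $d_{k} = \frac{1}{2}\big(\hksqrt{\lambda_{k}^{2}+4\mu\lambda_{k}}-\lambda_{k}\big)$; setting $\alpha = 4\mu$ yields exactly $d_{k} = \frac12\hksqrt{\lambda_{k}^{2}+\lambda_{k}\alpha}-\frac12\lambda_{k}$, which is the $k$-th eigenvalue of $\bK_{Z^{\star}}$ in~\eqref{eq:Kz_star_def}, and summing gives~\eqref{eq:D_of_alpha_vector} with the constraint tight. Substituting $d_{k}$ back into the objective, using $\lambda_{k}+d_{k} = \frac12\hksqrt{\lambda_{k}^{2}+\lambda_{k}\alpha}+\frac12\lambda_{k}$ and simplifying the ratio $\frac{\lambda_{k}+d_{k}}{d_{k}}$ with the identity $(\hksqrt{\lambda_{k}+\alpha}+\hksqrt{\lambda_{k}})^{2}/\alpha = \cdots$, recovers~\eqref{eq:RperpD_vec}.

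For (ii), uniqueness of $\alpha$ follows because the right-hand side of~\eqref{eq:D_of_alpha_vector} is, for each $\lambda_{k}>0$, a strictly increasing continuous function of $\alpha$ on $\Rl^{+}$ that vanishes at $\alpha=0$ and diverges as $\alpha\to\infty$ (here $\bK_{X}\succ0$ is used so that at least one $\lambda_{k}>0$); hence $D\mapsto\alpha$ is a bijection $\Rl^{+}\to\Rl^{+}$. For (iii), the objective is convex in $(d_{1},\dots,d_{N})$ on the positive orthant (each summand $\log(1+\lambda_{k}/d_{k})$ is convex in $d_{k}$) and the feasible set is convex, so the KKT point is the unique minimizer; moreover the minimum is strictly decreasing in the trace bound, so the constraint must be active, giving $\tr\bK_{Z}=ND$. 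Thus among jointly Gaussian admissible $Z$ the unique optimizer has covariance $\bK_{Z^{\star}}$. To upgrade this to the ``iff'' over \emph{all} admissible $Z$: any optimal $Z$ must, by Lemma~\ref{lem:Z_must_be_Gaussian} and the determinant hypothesis $\abs{\bK_{X+Z}}=\abs{\bK_{X+Z_{G}}}>0$, be jointly Gaussian with $X$ and have $\bK_{Z}=\bK_{Z^{\star}}$; since $\bK_{X,Z}=\bzero$ and joint Gaussianity then force independence, we get $Z\sim\Nsp(\bzero,\bK_{Z^{\star}})$, and conversely this $Z$ is feasible and attains $R^{\perp}(D)$ by the computation above. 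The main obstacle I anticipate is making the ``diagonalizing $\bK_{Z}$ is optimal'' step rigorous — one must handle the interplay between the two determinants under a fixed trace, rather than a single one, so a direct Hadamard-inequality appeal is not enough and a short majorization or perturbation argument is needed.
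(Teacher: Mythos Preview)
Your route differs from the paper's in a meaningful way. After the same first step (Lemma~\ref{lem:Z_must_be_Gaussian} reduces the minimization to Gaussian $Z$ with $\bK_{X,Z}=\bzero$), the paper does \emph{not} diagonalize $\bK_{X}$ and then argue that the optimal $\bK_{Z}$ must be simultaneously diagonal. Instead it works directly with the matrix Lagrangian $L(\bK_{Z})=\log\abs{\bK_{X}+\bK_{Z}}-\log\abs{\bK_{Z}}+\beta\,\tr(\bK_{Z})$ and sets the matrix differential $\partial L(\bK_{Z})=0$, using $\partial\log\abs{\bM}=\tr(\bM^{-1}\partial\bM)$. This yields the matrix equation $(\bK_{X}+\bK_{Z^{\star}})^{-1}-\bK_{Z^{\star}}^{-1}+\beta\bI=\bzero$, which is solved in closed form to give~\eqref{eq:Kz_star_def} with $\alpha=4/\beta$; the eigenvalue formulas~\eqref{eq:RperpD_vec} and~\eqref{eq:D_of_alpha_vector} then follow by spectral mapping (Lemma~\ref{lem:Sylvester}). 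Commutativity of $\bK_{Z^{\star}}$ with $\bK_{X}$ is thus a \emph{consequence} of the first-order condition rather than a hypothesis to be justified, so the obstacle you flag never arises.

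Your alternative can be made to work, but the step you single out is a genuine gap and your parenthetical Hadamard sketch fails as written: in the eigenbasis of $\bK_{X}$, replacing $\bK_{Z}$ by its diagonal increases (or leaves unchanged) \emph{both} $\abs{\bK_{Z}}$ and $\abs{\bK_{X}+\bK_{Z}}$ by Hadamard's inequality, so the ratio can move either way. A clean repair combines the strict convexity of $\bK_{Z}\mapsto\log(\abs{\bK_{X}+\bK_{Z}}/\abs{\bK_{Z}})$ (Lemma~II.3 of~\cite{digcov01}, which you already use for uniqueness) with the symmetry of the problem: for any orthogonal $\bQ$ commuting with $\bK_{X}$, the matrix $\bQ\bK_{Z^{\star}}\bQ^{T}$ is feasible and attains the same objective, so strict convexity forces $\bQ\bK_{Z^{\star}}\bQ^{T}=\bK_{Z^{\star}}$, whence $\bK_{Z^{\star}}$ commutes with $\bK_{X}$. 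Once this is in place, your scalar KKT computation and your arguments for parts~(ii) and~(iii) are correct and agree with the paper's.
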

\begin{proof}
Let $\Usp$ denote the set of all $N$-length random vectors uncorrelated to $X$, and define the sets $\Gsp_{D} \subset \Bsp_{D} \subset \Usp$ as
%
\begin{align}\label{eq:Asp_def}
 \Gsp_{D} \eq \set{ Z \in \Bsp_{D} :  Z \sim \Nsp(\bzero, \bK_{Z})	}; & &
 \Bsp_{D} \eq \set{ Z \in \Usp     :  \tr(\bK_{Z})/N\leq D,\, \abs{\bK_{Z}} >0}.
\end{align}
With the above definitions,~\eqref{eq:RperpD_func_def} can be written as
\begin{align}
 	R^{\perp}(D) & = \min_{Z\in\Bsp_{D}} \Set{	\bar{I}(X;X+Z) 	} 
 \overset{(a)}{=} \min_{Z\in\Gsp_{D}} \Set{	\bar{I}(X;X+Z) 	}
\overset{(b)}{=}	\min_{Z\in\Gsp_{D}}	\Set{	\bar{h}(X+Z) - \bar{h}(Z)			}	\nonumber\\
&\overset{\hphantom{(a)}}{=}	\frac{1}{2N} \!\min_{Z\in\Gsp_{D}}	\!\Set{	\log \abs{\bK_{X} \!+\!\bK_{Z}} \!-\! \log\abs{\bK_{Z}}} 
\overset{\hphantom{(a)}}{=}	\frac{1}{2N}\min_{Z\in\Gsp_{D}}	\!\!\Set{	\log \abs{\bK_{Z}^{-1}\bK_{X}+\bI} }		\label{eq:log_X_I},
\end{align}
where $(a)$ follows directly from Lemma~\ref{lem:Z_must_be_Gaussian} and where $(b)$ holds since the definition of $\Bsp_{D}$ (see~\eqref{eq:Asp_def}), guarantees that both $\bar{h}(X+Z)$ and $\bar{h}(Z)$ exist.
%

We now prove, by contradiction, that the minimizer of $\log|\bK_{Z}^{-1}\bK_{X}+\bI|$ in $\Gsp_{D}$, namely $Z^{\star}$, is such that $\tr(\bK_{Z^{\star}})/N=D$.
For this purpose, suppose that $b \eq ND/\tr(\bK_{Z^{\star}}) >1$, and let $\setFrTok{\zeta_{k}}{1}{N}$ be the eigenvalues of $\bK_{Z^{\star}}^{-1}\bK_{X}$.
Let $Z'\in\Gsp_{D}$ be a Gaussian random vector  with covariance matrix $\bK_{Z'} = b \bK_{Z^{\star}}$.
We then have that $\tr(\bK_{Z'})/N=D$, and that
%
\begin{align}
\log\abs{\bK_{Z^{\star}}^{-1}\bK_{X} + \bI} 
&= \Sumfromto{k=1}{N}\log\left(\zeta_{k} + 1\right)
> \Sumfromto{k=1}{N}\log\left(\frac{\zeta_{k}}{b} + 1\right)
= \log \abs{\bK_{Z'}^{-1}\bK_{X} + \bI },
\end{align}
since $b>1$ and because $\log(\cdot)$ is a strictly increasing function.
Thus $Z^{\star}$ cannot be a minimizer of $\log\abs{\bK_{Z}^{-1}\bK_{X}+\bI}$ in $\Gsp_{D}$ unless $\tr({\bK_{Z^{\star}}})=ND$.

The minimizer of $\log(\abs{\bK_{X}+\bK_{X}}/\abs{\bK_{Z}})$ subject to $\tr(\bK_{Z})/N = D$ can be found using a variational approach.
More precisely, the covariance matrix of the minimizer, $\bK_{Z^{\star}}$,  must \emph{necessarily} be such that the derivative of the Lagrangian
%
\begin{align}
 L(\bK_{Z}) & \eq \log \abs{\bK_{X}+\bK_{Z}} - \log\abs{\bK_{Z}}	+ \beta \tr(\bK_{Z})
\end{align}
with respect to $\bK_{Z}$ 
is zero at $\bK_{Z}=\bK_{Z^{\star}}$, for some $\beta\in\Rl$, 
which
is equivalent to the condition that the matrix differential $\partial L(\bK_{Z}) =0, \, \forall \partial \bK_{Z}$.
Using the fact that $\partial \log\abs{\bM}= \tr(\bM^{-1}\partial \bM)$, for any positive definite matrix $\bM$,
the necessary condition for $Z^{\star}$ to be a minimizer takes the form
\begin{align}
&  \partial L(\bK_{Z})\Big |_{\bK_{Z}=\bK_{Z^{\star}}} 
 \!\!\!=
 \tr\!\left[ (\bK_{X}+\bK_{Z^{\star}}\!)^{-1}  \partial \bK_{Z}\right] - \tr\!\left( \bK_{Z^{\star}}^{-1} \partial \bK_{Z}\right) 
		+ \beta \tr(\partial \bK_{Z} ) 
 = 0,  \forall \partial \bK_{Z} 		\nonumber\\
& \iff 
\tr \left( \left[ (\bK_{X}+\bK_{Z^{\star}})^{-1}   -  \bK_{Z^{\star}}^{-1} 
		+ \beta \bI \right]\partial \bK_{Z}\right)  = 0, \forall \partial \bK_{Z} 		 \nonumber\\
&\iff \!	(\bK_{X} \!+\!\bK_{Z^{\star}} \!)^{-1} \! \! - \! \bK_{Z^{\star}}^{-1} + \beta \bI \! =\! \bzero 		
%
%
%
\iff \!	\bK_{Z^{\star}}\!	 =\!	\pm \frac{1}{2}\hksqrt{\bK_{X}^{2} + \frac{4}{\beta}\bK_{X} } - \frac{1}{2} \bK_{X}
%
.\label{eq:Kz_star}
\end{align}
The fact that $\bK_{Z}^{\star}$ needs to be positive definite implies that $\beta>0$ 
and that it is infeasible to have a negative sign before the square root in~\eqref{eq:Kz_star}.
This, together with the change of variable $\alpha \eq 4/\beta$, leads directly to~\eqref{eq:Kz_star_def}, with $\alpha>0$.
On the other hand, the value of $\alpha$ must be such that the equality constraint $\tr(\bK_{Z^{\star}})/N=D$ is satisfied.
From~\eqref{eq:Kz_star}, and applying Lemma~\ref{lem:Sylvester} (see appendix), this requirement is equivalent to
$
 D = \frac{1}{N}\tr(\bK_{Z^{\star}}) = \frac{1}{2N}\sumfromto{k=1}{N} \hksqrt{\lambda_{k}^{2} + \lambda_{k}\alpha} - \lambda_{k}
$,
which proves~\eqref{eq:D_of_alpha_vector}.
Similarly,~\eqref{eq:RperpD_vec} is obtained by substituting~\eqref{eq:Kz_star} into~\eqref{eq:log_X_I} and then applying Lemma~\ref{lem:Sylvester}, which yields:
\begin{align}
R^{\perp}(D) &=	\frac{1}{2N}\log \abs{\bK_{Z}^{-1}(\bK_{X}+\bK_{Z^{\star}})} 
= \frac{1}{2N}\Sumfromto{k=1}{N} \log\left( \frac{\hksqrt{\lambda_{k}^{2} + \lambda_{k}\alpha} + \lambda_{k} }{\hksqrt{\lambda_{k}^{2} + \lambda_{k}\alpha} - \lambda_{k}}\right)\nonumber\\
&= \frac{1}{2N}\Sumfromto{k=1}{N} \log\left( {\left[\hksqrt{\lambda_{k}^{2} + \lambda_{k}\alpha} + \lambda_{k} \right]^{2}}\Big/{\lambda_{k}\alpha}\right) 
= \frac{1}{N}\Sumfromto{k=1}{N} \log\left[ \frac{\hksqrt{\lambda_{k} + \alpha} + \hksqrt{\lambda_{k}} }{\hksqrt{\alpha}}\right].\nonumber
\end{align}
%
The uniqueness of $\alpha$ is easily verified by noting that the right hand side of~\eqref{eq:D_of_alpha_vector} is monotonically increasing with $\alpha$.
Since $\alpha =1/\beta$ is unique, it follows from~\eqref{eq:Kz_star} that
the covariance matrix of ${Z^{\star}} = \arg \min_{Z\in\Bsp_{D}}\bar{I}(X;X+Z)$  is unique%
\footnote{This is in agreement with the fact that $\log(\abs{\bK_{X} + \bK_{Z}}/\abs{\bK_{Z}})$ is strictly convex in $\bK_{Z}$ for $\abs{\bK_{X}}>0$, as shown in~\cite[Lemma~II.3]{digcov01}, together with the fact that the set $\set{\bK_{Z}: \tr(\bK_{Z})\leq D,\, \abs{\bK_{Z}}>0}$ is convex.},
completing the proof.
\end{proof}
%

\paragraph*{Transform Coding Realization of $R^{\perp}(D)$:}
Closer examination of Lemma~\ref{lem:Sylvester}, when used in~\eqref{eq:Kz_star_def}, suggests that,  for a Gaussian source $X$, $R^{\perp}(D)$ can be achieved by the transform coding architecture shown in Fig.~\ref{fig:transform}.
More precisely, an end-to-end distortion having the optimal covariance matrix $\bK_{Z^{\star}}$ given by~\eqref{eq:Kz_star_def} is obtained by choosing the transform $\bT$ such that $\bT\boldsymbol{\Lambda}\bT^{T}=\bK_{X}$, where $\boldsymbol{\Lambda}\eq\diag(\lambda_{1},\dotsc,\lambda_{N})$ (i.e., the KLT transform for $X$), and by having a Gaussian random vector of quantization errors $E$ with 
$\mathbb{E}[EE^{T}]\!\!=\!\!\frac{1}{2N}\diag(\setFrTok{\hksqrt{\lambda_{k}^{2} + \lambda_{k}\alpha} - \lambda_{k}}{1}{N} )$.%
\footnote{It is easy to show that these noise variances, namely $\sigsq_{E_{k}}$, are such that the derivatives
$
{\partial I(\hat{U}_{k};V_{k})} /{\partial \sigsq_{E_{k} }} 
$
are the same for all $k$.
}
Interestingly, here $\mathbb{E}[E E^{T}]$ is not a scaled identity matrix, as is usually the case in KLT transform coding, cf.~\cite{jaynol84}.
This discrepancy arises from the approximation $\mathbb{E}[E_{k}^{2}] =c \mathbb{E}[U_{k}^{2}] 2^{-2b_{k}}$, commonly used to link the variance of $E_{k}$ to the bit-rate $b_{k}$ at which each $k$-th transform coefficient is quantized.
In this expression, $c\!>\!0$ is a constant that depends on the PDF of the source and on the type of quantizer.
The well known optimal bit allocation analyzed, e.g., in~\cite{jaynol84},
is based upon this formula, and thus minimizes the total bit-rate $r\eq \frac{1}{2N}\sumfromto{k=1}{N}\log_{2}( {\mathbb{E}[U_{k}^{2}]}/{\mathbb{E}[E_{k}^{2}]} ) \!-\! \frac{1}{2}\log_{2}(c)$.
On the other hand,  the optimal quantization errors $E_{k}$ implied by
Theorem~\ref{thm:RD_Kx} need to be Gaussian, their variances being 
such that the end-to-end mutual information
$\bar{I}(X;Y)\!\! = \!\!\frac{1}{2N} \sumfromto{k=1}{N} \!\log_{2}(\mathbb{E}[U_{k}^{2}]  / \mathbb{E}[E_{k}^{2}] + 1)$ is minimized.%
\footnote{The mutual information per dimension between $X$ and $Y$ in this case equals the sum of the mutual informations between each pair $U_{k}$, $E_{k}$, since all these  scalars are mutually independent and $\bT$ is invertible.}
Thus, the difference in the optimal values for $\setFrTok{\mathbb{E}[E_{k}^{2}]}{1}{N}$ obtained in each case is due to the fact that $r\neq \bar{I}(X;Y)$.

%
%
\subsection{$R^{\perp}(D)$ for Gaussian Stationary Random Processes}\label{subsec:RDperp_Rand_Proc}
The $R^{\perp}(D)$ function defined in~\eqref{eq:RperpD_func_def} can be extended to random processes as follows:
%
\begin{defn}
The uncorrelated quadratic rate-distortion function $R^{\perp}(D)$ for a random process $X$ is defined as
\begin{align}\label{eq:Rperp}
 	R^{\perp}(D)  = \min_{  
	Y\,:\,
	\mathbb{E}[X(Y-X)^{T}]=\bzero,\,
	\lim_{N\to\infty}\frac{1}{N}\tr(\bK_{Y-X})\leq D,\,
	\lim_{N\to\infty}\abs{\bK_{Y-X}}^{1/N}>0}
	\bar{I}(X;Y),
\end{align} 
where $Y$ is a random process.
\end{defn}
%
The $R^{\perp}(D)$ function for \emph{stationary} Gaussian random processes can be derived from the results obtained in Section~\ref{subsec:RD_for_Vect}, by restricting to random vectors $X\in\Rl^{N}$ having a Toeplitz covariance matrix, and then letting $N\to\infty$.
More precisely, we have the following result:
%
%
\begin{thm}\label{thm:RD_Sz}
Let the source $X$ be a Gaussian stationary random process with spectrum $S_{X}(\w)$ such that $S_{X}(\w)>0$, $\aeonpipi$.
Then
\begin{itemize}
 \item[(i)] For any $D>0$,
%
\begin{align}
 R^{\perp}(D) 
& 
			= \frac{1}{2\pi}\intfromto{-\pi}{\pi}\log \left(  \frac{\hksqrt{S_{X}(\w) + \alpha} + \hksqrt{S_{X}(\w)} }
				{\hksqrt{\alpha}} \right)d\w,\label{eq:Rperp_def}\\
%
\textrm{where the scalar }& \textrm{parameter $\alpha\in\Rl^{+}$ is such that}\fspace\nonumber\\
%
 D & 
= \frac{1}{4\pi}\intfromto{-\pi}{\pi}
{\left(\hksqrt{S_{X}(\w) +\alpha } - \hksqrt{S_{X}(\w)} \right)\hksqrt{S_{X}(\w)}                 }d\w.\label{eq:D_of_alpha}
\end{align}
%
%
%
\item[(ii)] For each $D>0$, the value of $\alpha$ that satisfies~\eqref{eq:D_of_alpha} is \emph{unique}.
%
\item[(iii)]
Let $Y$ satisfy $\mathbb{E}[X(Y-X)^{T}]=\bzero$,  $\frac{1}{N}\lim_{N\to\infty}\tr(\bK_{Y-X}) \leq D$, $\lim_{N\to\infty} \abs{\bK_{Y-X}}^{\frac{1}{N}} > 0$.
Then
$\bar{I}(X;Y) = R^{\perp}(D)$ \emph{iff} $Z\eq Y-X$ is a Gaussian stationary random process with spectrum 
%
\begin{align}
 S^{\star}_{Z}(\w) & \eq \frac{1}{2}\left(\hksqrt{ S_{X}(\w)  + \alpha } - \hksqrt{S_{X}(\w)}\right)\hksqrt{S_{X}(\w)},\fspace\aeonpipi.\label{eq:Szstar_def_0}
\end{align}
\end{itemize}
\end{thm}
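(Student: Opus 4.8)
The plan is to obtain Theorem~\ref{thm:RD_Sz} as the asymptotic ($N\to\infty$) specialization of Theorem~\ref{thm:RD_Kx}, exploiting the Toeplitz structure imposed on $\bK_X$ by stationarity together with Szeg\H{o}'s theorem. Concretely, for each $N$ let $\bK_X^{(N)}$ be the $N\times N$ Toeplitz covariance matrix of the process $X$, with eigenvalues $\{\lambda_k^{(N)}\}_{k=1}^N$. By Szeg\H{o}'s first limit theorem, for any function $g$ continuous on the closure of the range of $S_X$ one has $\frac{1}{N}\sum_{k=1}^N g(\lambda_k^{(N)}) \to \frac{1}{2\pi}\int_{-\pi}^{\pi} g(S_X(\w))\,d\w$ as $N\to\infty$. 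Applying this with $g(\lambda)=\tfrac12(\hksqrt{\lambda^2+\lambda\alpha}-\lambda)$ turns the finite-$N$ distortion relation \eqref{eq:D_of_alpha_vector} into \eqref{eq:D_of_alpha}, after rewriting $\hksqrt{\lambda^2+\lambda\alpha}-\lambda = (\hksqrt{\lambda+\alpha}-\hksqrt{\lambda})\hksqrt{\lambda}$; applying it with $g(\lambda)=\log\bigl((\hksqrt{\lambda+\alpha}+\hksqrt{\lambda})/\hksqrt{\alpha}\bigr)$ turns \eqref{eq:RperpD_vec} into \eqref{eq:Rperp_def}. The hypothesis $S_X(\w)>0$ a.e., combined with boundedness of $S_X$ (which follows if $R_X\in\ell^1$; otherwise one truncates and passes to the limit), ensures the relevant integrands are integrable and that the eigenvalues stay in a fixed compact set bounded away from $0$, so the continuity requirement in Szeg\H{o}'s theorem is met.

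**Next I would** handle part (ii), the uniqueness of $\alpha$. This is immediate from the same monotonicity argument used in Theorem~\ref{thm:RD_Kx}: the right-hand side of \eqref{eq:D_of_alpha} is, for fixed $S_X(\w)>0$, strictly increasing in $\alpha$ (the integrand $\tfrac12(\hksqrt{S_X(\w)+\alpha}-\hksqrt{S_X(\w)})\hksqrt{S_X(\w)}$ has positive $\alpha$-derivative a.e.), it is continuous in $\alpha$, it vanishes as $\alpha\to 0^+$, and it diverges as $\alpha\to\infty$; hence for every $D>0$ there is exactly one $\alpha\in\Rl^+$ solving \eqref{eq:D_of_alpha}. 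One should also note $S_Z^\star(\w)>0$ a.e., so $\lim_N |\bK_{Y-X}|^{1/N}>0$ is consistent with the feasibility constraint in \eqref{eq:Rperp}, again via Szeg\H{o} applied to $\log S_Z^\star$.

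**For part (iii)**, the characterization of the optimal $Z=Y-X$, I would argue in two directions. For the ``if'' direction: if $Z$ is the stationary Gaussian process with spectrum $S_Z^\star$ in \eqref{eq:Szstar_def_0}, then its $N\times N$ Toeplitz covariance matrix $\bK_Z^{(N)}$ is asymptotically equivalent (in the Gray sense) to the matrix $\tfrac12\hksqrt{(\bK_X^{(N)})^2+\alpha\bK_X^{(N)}}-\tfrac12\bK_X^{(N)}$ from \eqref{eq:Kz_star_def}, so $\bar h(X+Z)-\bar h(Z)=\lim_N \tfrac1{2N}\log(|\bK_X^{(N)}+\bK_Z^{(N)}|/|\bK_Z^{(N)}|)$ evaluates to \eqref{eq:Rperp_def} by Szeg\H{o}, matching $R^\perp(D)$, and $Z$ is uncorrelated to $X$ because the stationary realization can be taken as a (stable, causal) filtering of an innovations process independent of $X$ — so optimality is attained. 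For the ``only if'' direction: suppose $\bar I(X;Y)=R^\perp(D)$ with $Z=Y-X$ feasible. Restricting attention to the length-$N$ marginals and using \eqref{eq:log_X_I}–\eqref{eq:Kz_star} together with the strict convexity of $\bK_Z\mapsto\log(|\bK_X+\bK_Z|/|\bK_Z|)$ (the footnoted $\cite[\text{Lemma~II.3}]{digcov01}$ fact) forces, in the limit, $\bK_Z^{(N)}$ to be asymptotically equivalent to the matrix in \eqref{eq:Kz_star_def}; since $\bK_X^{(N)}$ is Toeplitz, so is its limiting optimal $\bK_Z^{(N)}$, which identifies $Z$ as stationary with spectrum $S_Z^\star$, and Lemma~\ref{lem:Z_must_be_Gaussian} (applied at each $N$, then passed to the limit) forces $Z$ to be Gaussian and jointly Gaussian with $X$.

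**The main obstacle** is the interchange of limits and the asymptotic-equivalence bookkeeping: one must show that the per-dimension quantities ($\bar h$, $\bar I$, $\tfrac1N\tr\bK_{Y-X}$, $|\bK_{Y-X}|^{1/N}$) for a general feasible process $Y$ genuinely converge and that the $N$-truncated optima converge to the stated spectral formulas, rather than merely being bounded by them — this is where Szeg\H{o}'s strong limit theorem and Gray's theory of asymptotically equivalent sequences of Toeplitz matrices must be invoked carefully, and where the a.e.\ positivity of $S_X$ (and hence of $S_Z^\star$) is essential to keep $\log$ integrable and the matrices uniformly invertible. A secondary technical point is justifying that the optimal $Z$ for the process problem can be realized as a stationary process uncorrelated to $X$ (so that the infimum is actually a minimum and part (iii) is non-vacuous); this is where the reference to the achievability construction in \cite{derost08b} is reassuring, but within this paper it suffices to exhibit the Gaussian stationary $Z$ with spectrum $S_Z^\star$ built from an independent innovations sequence.
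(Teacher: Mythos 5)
Your proposal follows essentially the same route as the paper: apply the finite-dimensional Theorem~\ref{thm:RD_Kx} to the $N\times N$ Toeplitz covariance truncations and pass to the limit via the Grenander--Szeg\H{o} eigenvalue-distribution theorem (Lemma~\ref{lem:grensze}) to obtain the integral formulas and the optimal spectrum. If anything, you are more explicit than the paper about the technical bookkeeping (convergence of $\alpha^{(N)}$, asymptotic equivalence, and the realization of the optimal stationary $Z$), which the paper's proof passes over silently.
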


\begin{proof}
Define, from the random processes $X$ and $Y$, the vectors $X^{(N)}\eq [X_{1}\,\cdots \, X_{N}]^{T}$, $Y^{(N)}\eq [Y_{1}\,\cdots \, Y_{N}]^{T}$, $N\in\Nl$.
It is known that $\breve{\lambda}^{(N)} \geq \breve{\lambda}^{(N+1)}, \forall N\in\Nl$,  where $\breve{\lambda}^{(N)}$ and $\breve{\lambda}^{(N+1)}$ are the smallest eigenvalues of $\bK_{X^{(N)}}$ and $\bK_{X^{(N+1)}}$, respectively (see e.g.~\cite[Theorem~4.3.8]{horjoh85}).
This result, together with Lemma~\ref{lem:grensze}, in the Appendix, and the fact that
$S_{X}(\w) >0$, $\aeonpipi$, implies that $\abs{\bK_{X^{(N)}}} >0$, for all $N\in\Nl$.
We can then apply Theorem~\ref{thm:RD_Kx} to each $X^{(N)}$, $\forall N\in\Nl$, obtaining
%
\begin{align}
 R^{\perp(N)}(D) \eq \frac{1}{N}\Sumfromto{k=1}{N} \log
\Bigg(\Bigg( \hksqrt{\lambda_{k}^{(N)} +\alpha^{(N)} } + \hksqrt{\lambda_{k}^{(N)}}\Bigg) \big/{\hksqrt{\alpha^{N}}}
\Bigg),\fspace \forall N\in\Nl,\label{eq:R_N}
\end{align}
where $\alpha^{(N)}$ satisfies
%
\begin{align}
 	D = \frac{1}{2N}\sumfromto{k=1}{N}\hksqrt{\left[\lambda_{k}^{(N)}\right]^{2} + \lambda_{k}^{(N)}\alpha^{(N)}  } -  \lambda_{k}^{(N)},\fspace \forall N\in\Nl, \label{eq:D_N}
\end{align}
and where $\setFrTok{\lambda_{k}^{(N)}}{1}{N}$ denotes the set of  eigenvalues of $\bK_{X^{(N)}}$.
From~\eqref{eq:D_of_alpha}, the optimal distortion covariance matrix for $X^{(N)}$ is
%
\begin{align}
 	\bK_{Z^{\star}}^{(N)} \eq \frac{1}{2}\hksqrt{\bK_{X}^{2} + \alpha \bK_{X}} - \frac{1}{2}\bK_{X}. \label{eq:Kzstr_N}
\end{align}
Direct application of Lemma~\ref{lem:grensze} (see the Appendix) to~\eqref{eq:R_N} and~\eqref{eq:D_N} yields
\begin{align}
 R^{\perp}(D) 
& = \lim_{N\to\infty} R^{\perp(N)}(D)
= \frac{1}{2\pi}\intfromto{-\pi}{\pi}\log \left(  \frac{\hksqrt{S_{X}(\w) + \alpha} + \hksqrt{S_{X}(\w)} }
				{\hksqrt{\alpha}} \right)d\w,
\end{align}
wherein $\alpha\eq \lim_{N\to\infty} \alpha^{(N)}$ is the only scalar that satisfies
%
\begin{align}
 D & 
= \frac{1}{4\pi}\intfromto{-\pi}{\pi}
{\left(\hksqrt{S_{X}(\w) +\alpha } - \hksqrt{S_{X}(\w)} \right)\hksqrt{S_{X}(\w)}                 }.
\end{align}
Similarly, applying Lemma~\ref{lem:grensze} to~\eqref{eq:Kzstr_N} we obtain~\eqref{eq:Szstar_def_0}.
This completes the proof.
\end{proof}
\begin{rem}
 It is interesting to note that the equations characterizing the optimal UD feedback converters derived in~\cite{dersil08} achieve an end-to-end distortion whose spectrum is given precisely by~\eqref{eq:Szstar_def_0}.
Furthermore, it is easy to show that such converters would achieve the $R^{\perp}(D)$ function if the noise due to the scalar quantization within the feedback loop were white Gaussian noise uncorrelated with  the input.
\end{rem}

\section{Comparison with $R(D)$}\label{sec:Comparison}
The next theorem shows that $R^{\perp}(D)$ shares strict monotonicity and convexity with $R(D)$, but deviates from $R(D)$ in the asymptotic limit of large distortions.
%
\begin{thm}\label{thm:convexity}
For any Gaussian random vector (stationary random process) $X$ with positive definite covariance matrix $\bK_{X}$ (with $S_{X}(\w)>0, \aeonpipi$),  the function $R^{\perp}(D)$ is monotonically decreasing and convex.
In addition, 
$D\to\infty \iff R^{\perp}\to 0$, and 
$D\to 0 \iff R^{\perp}\to\infty$.
\end{thm}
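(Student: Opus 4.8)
The plan is to work from the closed-form expressions for $R^{\perp}(D)$ and $D$ in terms of the scalar parameter $\alpha$ (equations~\eqref{eq:RperpD_vec}--\eqref{eq:D_of_alpha_vector} in the vector case, and~\eqref{eq:Rperp_def}--\eqref{eq:D_of_alpha} in the process case), and to exploit the fact that both are expressed as (normalized) sums/integrals of a common scalar function of $\alpha$. I would treat $D$ and $R^{\perp}$ as functions of the single variable $\alpha>0$ and establish the monotonicity and limiting behaviour termwise, since a sum (or integral) of functions that are each monotone/convex in a reparametrized sense inherits the property.

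First I would record the two scalar building blocks: for fixed $\lambda>0$ set
\begin{align}
d(\alpha) &\eq \tfrac{1}{2}\hksqrt{\lambda^{2}+\lambda\alpha}-\tfrac{1}{2}\lambda, &
\rho(\alpha) &\eq \log\!\left(\frac{\hksqrt{\lambda+\alpha}+\hksqrt{\lambda}}{\hksqrt{\alpha}}\right).\nonumber
\end{align}
A direct differentiation shows $d'(\alpha)=\tfrac{\lambda}{4\hksqrt{\lambda^{2}+\lambda\alpha}}>0$, so $D$ is strictly increasing in $\alpha$ (this was already noted in Theorem~\ref{thm:RD_Kx}(ii) for uniqueness); and $d(\alpha)\to 0$ as $\alpha\to 0^{+}$, $d(\alpha)\to\infty$ as $\alpha\to\infty$. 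Likewise $\rho'(\alpha)<0$ for all $\alpha>0$ (a short computation, using that $\tfrac{d}{d\alpha}\hksqrt{\lambda+\alpha}=\tfrac{1}{2\hksqrt{\lambda+\alpha}}<\tfrac{1}{2\hksqrt{\alpha}}$ term-dominates), and $\rho(\alpha)\to\infty$ as $\alpha\to 0^{+}$, $\rho(\alpha)\to 0$ as $\alpha\to\infty$. Since $D=\tfrac{1}{N}\sum_k d_{\lambda_k}(\alpha)$ (resp.\ an integral over $\w$ with $\lambda=S_X(\w)$) and $R^{\perp}=\tfrac{1}{N}\sum_k\rho_{\lambda_k}(\alpha)$, the map $\alpha\mapsto D(\alpha)$ is a strictly increasing bijection from $(0,\infty)$ onto $(0,\infty)$ by the intermediate value theorem and monotonicity, so it has a strictly increasing inverse $\alpha(D)$; then $R^{\perp}(D)=\tfrac{1}{N}\sum_k\rho_{\lambda_k}(\alpha(D))$ is a composition of a strictly decreasing function with a strictly increasing one, hence strictly decreasing. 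The four limit statements follow immediately: $D\to\infty\iff\alpha\to\infty\iff R^{\perp}\to0$, and $D\to0\iff\alpha\to0^{+}\iff R^{\perp}\to\infty$ (for the integral version one needs that $S_X(\w)$ is finite a.e.\ and integrable so dominated/monotone convergence applies, which holds since $S_X\in L^1$).

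For convexity I would show that $R^{\perp}$ as a function of $D$ is convex by the parametric-curve criterion: writing $R^{\perp}=\phi(\alpha)$, $D=\psi(\alpha)$ with $\psi'>0$, convexity of $D\mapsto R^{\perp}$ is equivalent to $\tfrac{d}{d\alpha}\!\big(\phi'(\alpha)/\psi'(\alpha)\big)\ge 0$, i.e.\ the ratio $\phi'/\psi'$ is nondecreasing in $\alpha$. Both $\phi'=\tfrac{1}{N}\sum_k\rho'_{\lambda_k}$ and $\psi'=\tfrac{1}{N}\sum_k d'_{\lambda_k}$ are sums of scalar terms; I would try to show the stronger \emph{termwise} statement that $\rho'_{\lambda}(\alpha)/d'_{\lambda}(\alpha)$ is nondecreasing in $\alpha$ for each fixed $\lambda$, from which the convexity of each one-dimensional curve follows, and then argue that a ``sum'' of convex decreasing rate-distortion curves (sharing the same parameter $\alpha$) is convex — this last aggregation step is exactly the standard fact underlying water-filling and can be justified by noting that $R^{\perp}(D)=\min\{\sum_k \tfrac1N R_k : \sum_k \tfrac1N D_k\le D\}$ where each $R_k=\rho_{\lambda_k}$ is the convex decreasing curve with distortion $D_k=d_{\lambda_k}$, and the infimal-convolution / perspective of convex functions is convex. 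Alternatively, since $\log(|\bK_X+\bK_Z|/|\bK_Z|)$ is strictly convex in $\bK_Z$ (cited from~\cite[Lemma~II.3]{digcov01}) and the trace constraint is linear, $R^{\perp}(D)$ is the value function of a convex program with the right-hand side $D$ entering linearly, hence convex in $D$ — I would likely use this cleaner argument for convexity and reserve the parametric computation as a remark.

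The main obstacle I anticipate is the convexity claim in the stationary-process case: the clean ``value function of a convex program'' argument requires care because the feasible set involves the asymptotic conditions $\lim_N \tfrac1N\tr(\bK_{Y-X})\le D$ and $\lim_N |\bK_{Y-X}|^{1/N}>0$, and one must ensure the convexity survives the $N\to\infty$ limit; the safe route is to note $R^{\perp}(D)=\lim_N R^{\perp(N)}(D)$ (established in the proof of Theorem~\ref{thm:RD_Sz}), observe each $R^{\perp(N)}$ is convex in $D$ by the finite-dimensional argument, and use that a pointwise limit of convex functions is convex. The secondary nuisance is the boundary behaviour as $\alpha\to0^{+}$ in the integral form — one must confirm that $\int \rho_{S_X(\w)}(\alpha)\,d\w\to\infty$, which follows because $\rho_{\lambda}(\alpha)\sim\tfrac12\log(\lambda/\alpha)\to\infty$ pointwise and monotone convergence applies.
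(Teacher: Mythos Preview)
Your plan is correct and would yield a valid proof. The monotonicity and limit arguments are essentially the paper's (termwise sign of $d'(\alpha)$ and $\rho'(\alpha)$, then compose with the bijection $\alpha\leftrightarrow D$). For convexity, however, the paper takes a shorter route that you set up but do not exploit: carrying out the derivatives explicitly gives
\[
\frac{\partial R^{\perp}}{\partial \alpha}=-\frac{1}{2N\alpha}\sum_{k}\frac{\hksqrt{\lambda_k}}{\hksqrt{\lambda_k+\alpha}},\qquad
\frac{\partial D}{\partial \alpha}=\frac{1}{4N}\sum_{k}\frac{\hksqrt{\lambda_k}}{\hksqrt{\lambda_k+\alpha}},
\]
so $\dfrac{\partial R^{\perp}}{\partial D}=\dfrac{\phi'(\alpha)}{\psi'(\alpha)}=-\dfrac{2}{\alpha}$, a closed form independent of the $\lambda_k$'s. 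In fact each termwise ratio already satisfies $\rho'_{\lambda}(\alpha)/d'_{\lambda}(\alpha)=-2/\alpha$, so the ``aggregation step'' you worry about is vacuous here: all per-eigenvalue slopes coincide, and convexity follows in one line because $-2/\alpha$ is increasing in $\alpha$ and $\alpha$ is increasing in $D$.

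Your alternative argument---$R^{\perp}(D)$ is the value function of a convex program since $\log(\abs{\bK_X+\bK_Z}/\abs{\bK_Z})$ is convex in $\bK_Z$ and the trace constraint is linear---is valid and more conceptual, and your limiting argument $R^{\perp}=\lim_N R^{\perp(N)}$ for the process case is the right way to transfer it. What you lose relative to the paper is the explicit identity $\partial R^{\perp}/\partial D=-2/\alpha$; what you gain is that your argument does not depend on the particular parametric form and would generalize more readily. The infimal-convolution detour is correct in principle but unnecessary machinery given the termwise ratios are identical.
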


\begin{proof}
We present here only the proof for the case of Gaussian random vectors. 
The proof for Gaussian stationary processes proceeds in an analogous fashion.
\emph{Monotonicity:}
We have that 
$\frac{\partial R^{\perp}}{\partial  D} = \frac{\partial R^{\perp}}{\partial\alpha} \big /\frac{\partial D}{\partial\alpha}$,
provided that $\frac{\partial R^{\perp}}{\partial\alpha}$ and $\frac{\partial D}{\partial\alpha}$ exist and that the latter derivative is non-zero.
From~\eqref{eq:RperpD_vec}, we obtain
%
{\allowdisplaybreaks
\begin{align}
	\frac{\partial R^{\perp}}{\partial \alpha } 
&= \frac{1}{N}\sumfromto{k=1}{N} \frac{\partial}{\partial \alpha}\left[ \log\left( \hksqrt{\lambda_{k} + \alpha} + \hksqrt{\lambda_{k}} \right) - \frac{1}{2}\log(\alpha) \right] \nonumber \\
%
%
&=\frac{1}{2N} \sumfromto{k=1}{N} \left[\frac{\hksqrt{\lambda_{k} + \alpha} - \hksqrt{\lambda_{k}} }{\alpha \hksqrt{\lambda_{k} + \alpha} }   - \frac{1}{ \alpha}\right]
= - \frac{1}{2N\alpha} \sumfromto{k=1}{N} \frac{ \hksqrt{\lambda_{k}} }{\hksqrt{\lambda_{k} + \alpha} } . \label{eq:dRperp_d_alpha}
\end{align}
}
On the other hand, from~\eqref{eq:D_of_alpha_vector},
%
\begin{align}
 \frac{\partial D}{\partial \alpha}  \!
& = \!\frac{1}{2N}\!\Sumfromto{k=1}{N} \frac{\partial}{\partial \alpha}\left[ \hksqrt{\lambda_{k}^{2} + \!\lambda_{k}\alpha} - \!\lambda_{k}\right]
\!= \!\frac{1}{4N}\!\Sumfromto{k=1}{N} \left[ \!\frac{\lambda_{k}}{\hksqrt{\lambda_{k}^{2} + \lambda_{k}\alpha}} \!\right] 
\!= \!\frac{1}{4N}\!\Sumfromto{k=1}{N} \left[ \!\frac{ \hksqrt{\lambda_{k}}}{\hksqrt{\lambda_{k} + \alpha}} \!\right],\nonumber
\end{align}
and thus
%
\begin{align}\label{eq:dRdD_v}
 \frac{\partial R^{\perp}(D)}{\partial D} =-\frac{2}{\alpha}, \fspace \forall D>0,
\end{align}
proving that $R^{\perp}(\cdot)$ is a strictly decreasing function (since $\alpha>0$).
\emph{Convexity:} The fact that $\alpha$ grows monotonically with increasing $D$, 
together with~\eqref{eq:dRdD_v}, imply that 
$\frac{\partial R^{\perp}(D)}{\partial D}\big |_{D=D_{1}} > \frac{\partial R^{\perp}(D)} {\partial D}\big |_{D=D_{2}}
\iff
D_{1} > D_{2}
$, and thus $R^{\perp}(\cdot)$ is convex.
\emph{Limits:} It is clear from~\eqref{eq:RperpD_vec} that 
$\lim_{\alpha\to\infty}R^{\perp}=0$ and 
$\lim_{\alpha\to 0}R^{\perp}=\infty$.
Since, as can be seen from~\eqref{eq:dRperp_d_alpha}, $R^{\perp}$ decreases monotonically with increasing $\alpha$, $\forall \alpha\in (0,\infty)$, it follows that
$R^{\perp}\to 0 \iff \alpha \to \infty$, and that $R^{\perp}\to \infty \iff \alpha \to 0$.
Similarly, it follows from~\eqref{eq:D_of_alpha_vector} and the monotonicity of $D$ with respect to $\alpha$ that, for fixed $\setFrTok{\lambda_{k}}{1}{N}$, 
$D\to\infty\iff \alpha \to \infty$ and $D\to 0 \iff \alpha \to 0$.
We then have that
$D\! \to\infty \!\!\iff\! \!R^{\perp}\!\to 0$ and 
$D\to \!0 \!\!\iff\!\! R^{\perp}\!\to\infty$, completing the proof.
\end{proof}

We next show that $R^{\perp}(D)\!>\!R(D)$ for all $D\!>\!0$, converging asymptotically as $D\to 0$.
\begin{thm}\label{thm:gap}
For any Gaussian random vector (stationary random process) $X$ with positive definite covariance matrix $\bK_{X}$ (with $S_{X}(\w)>0, \aeonpipi$), the following holds
 \begin{align}
   \textrm{(i) } R^{\perp}(D) - R(D) & > 0,\fspace \forall D  > 0; & & 
   \textrm{(ii) }\lim_{D\to 0} \left[ R^{\perp}(D) - R(D) \right]  = 0\label{eq:limR_menus_R}.
 \end{align}
\end{thm}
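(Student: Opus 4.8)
The plan is to work with the vector case (the process case being analogous, as the authors note), and to exploit the closed-form expressions already available: $R^{\perp}(D)$ from~\eqref{eq:RperpD_vec}--\eqref{eq:D_of_alpha_vector}, and the classical reverse water-filling formula
\[
 R(D) = \frac{1}{2N}\sumfromto{k=1}{N}\log\left(\frac{\lambda_k}{\theta_k}\right),\qquad
 \theta_k \eq \min\{\theta,\lambda_k\},\qquad \frac{1}{N}\sumfromto{k=1}{N}\theta_k = D .
\]
For part~(i), rather than comparing the two formulas term by term at a common $D$ (which is awkward because the parameterizations differ), I would first observe that both $R(D)$ and $R^{\perp}(D)$ are obtained as minima of $\bar I(X;X+Z)$ over nested feasible sets: the feasible set for $R^{\perp}(D)$ (vectors $Z$ uncorrelated with $X$, $\tr\bK_Z/N\le D$, $|\bK_Z|>0$) is strictly contained in the feasible set for $R(D)$ (all $Z$ with $\tr\bK_{Z}/N \le D$). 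Hence $R^{\perp}(D)\ge R(D)$ is immediate; the real content is \emph{strictness}. For that I would argue that the $R(D)$-minimizer $Z_{\mathrm{opt}}$ is the backward test-channel noise, which is \emph{negatively} correlated with $X+Z_{\mathrm{opt}}$ and in particular satisfies $\bK_{X,Z_{\mathrm{opt}}}\ne\bzero$ whenever $D>0$; so it lies outside the $R^{\perp}$ feasible set. Combined with strict convexity of $\log(|\bK_X+\bK_Z|/|\bK_Z|)$ in $\bK_Z$ (cited from~\cite[Lemma~II.3]{digcov01}) and the fact that the $R^{\perp}$-optimal covariance $\bK_{Z^{\star}}$ from~\eqref{eq:Kz_star_def} differs from $\bK_{Z_{\mathrm{opt}}}$, this forces $R^{\perp}(D) > R(D)$ strictly.

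A cleaner alternative for~(i), which I would probably prefer to write out, is a direct per-eigenvalue inequality at matched distortion. Using~\eqref{eq:D_of_alpha_vector}, the $R^{\perp}$ solution assigns to mode $k$ a distortion $d_k \eq \tfrac12(\hksqrt{\lambda_k^2+\lambda_k\alpha}-\lambda_k)$ and contributes $\log\big((\hksqrt{\lambda_k+\alpha}+\hksqrt{\lambda_k})/\hksqrt{\alpha}\big)$ to $N R^{\perp}$. One checks by elementary algebra that this equals $\tfrac12\log\big((\lambda_k+d_k)/d_k\big)+\tfrac12\log\big((\lambda_k+d_k)/\lambda_k\big)$, i.e.\ an \emph{extra} nonnegative term $\tfrac12\log(1+d_k/\lambda_k)$ on top of the per-mode rate $\tfrac12\log((\lambda_k+d_k)/d_k)$ one would pay in a backward-channel description with that same per-mode distortion $d_k$. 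Summing over $k$ and using that reverse water-filling is the \emph{minimum} over all distortion allocations $\{d_k\}$ with $\sum d_k = ND$ of $\sum \tfrac12\log(\lambda_k/\theta_k)$ — which is $\le \sum\tfrac12\log((\lambda_k+d_k)/d_k)$ with equality structure handled by the water-level — gives $N R^{\perp}(D)\ge N R(D)+\sum_k \tfrac12\log(1+d_k/\lambda_k) > N R(D)$ since every $d_k>0$ when $D>0$. The main obstacle here is bookkeeping the two different distortion conventions ($d_k$ versus the $R(D)$ water-filling $\theta_k$) and making the "minimum over allocations" step rigorous rather than hand-wavy; I would state it as: for the $R^{\perp}$-allocation $\{d_k\}$, $R(D) \le \frac{1}{2N}\sum_k\log^{+}(\lambda_k/d_k) \le \frac{1}{2N}\sum_k\log((\lambda_k+d_k)/d_k)$, the first inequality because $\{d_k\}$ is \emph{a} feasible allocation for $R(D)$ with total $ND$.

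For part~(ii), I would show both $R^{\perp}(D)\to\infty$ and $R^{\perp}(D)-R(D)\to 0$ as $\alpha\to0$ (equivalently $D\to0$, by Theorem~\ref{thm:convexity}). Starting from the gap bound just derived, $0 < R^{\perp}(D)-R(D) \le \frac{1}{2N}\sum_k \log(1+d_k/\lambda_k)$ where $d_k = \tfrac12(\hksqrt{\lambda_k^2+\lambda_k\alpha}-\lambda_k) = \tfrac{\lambda_k}{2}(\hksqrt{1+\alpha/\lambda_k}-1)$. As $\alpha\to0$, each $d_k\to0$ (indeed $d_k \sim \alpha/4$), so each $\log(1+d_k/\lambda_k)\to0$, and since the sum is finite (fixed $N$) the whole gap tends to $0$. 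The only subtlety is confirming $\alpha\to0 \Leftrightarrow D\to0$, but that is exactly the monotonicity of $D$ in $\alpha$ established in the proof of Theorem~\ref{thm:convexity} together with $\lim_{\alpha\to0} D(\alpha)=0$ from~\eqref{eq:D_of_alpha_vector}. So the overall structure is: (a) express the per-mode $R^{\perp}$ rate as reverse-water-filling-style rate plus a clean surplus $\tfrac12\log(1+d_k/\lambda_k)$; (b) bound $R(D)$ above using the feasible allocation $\{d_k\}$ to get the strict gap for~(i); (c) let $\alpha\to0$ and use $d_k\to0$ for~(ii). I expect step~(a)'s algebraic identity and step~(b)'s "feasible-allocation upper bound on $R(D)$" to be the load-bearing parts; everything else is limits and monotonicity already in hand.
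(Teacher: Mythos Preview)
Your overall strategy is sound and genuinely different from the paper's, but your key algebraic identity in step~(a) is misstated. A direct computation (substitute $\alpha=4d_k(\lambda_k+d_k)/\lambda_k$, which follows from the definition of $d_k$) gives
\[
\log\!\left(\frac{\hksqrt{\lambda_k+\alpha}+\hksqrt{\lambda_k}}{\hksqrt{\alpha}}\right)
=\tfrac12\log\!\left(\frac{\lambda_k+d_k}{d_k}\right)
=\tfrac12\log\!\left(\frac{\lambda_k}{d_k}\right)+\tfrac12\log\!\left(1+\frac{d_k}{\lambda_k}\right),
\]
so the per-mode $R^{\perp}$ rate is \emph{exactly} $\tfrac12\log((\lambda_k+d_k)/d_k)$, not that plus a surplus. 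The surplus $\tfrac12\log(1+d_k/\lambda_k)$ you identified is correct, but it sits on top of the \emph{backward}-channel rate $\tfrac12\log(\lambda_k/d_k)$ (the per-mode $R(D)$ rate at distortion $d_k$), not on top of $\tfrac12\log((\lambda_k+d_k)/d_k)$ as you wrote. With that relabeling your chain $R(D)\le \tfrac{1}{2N}\sum_k\log^{+}(\lambda_k/d_k)<\tfrac{1}{2N}\sum_k\log((\lambda_k+d_k)/d_k)=R^{\perp}(D)$ goes through for~(i), and the gap bound $\tfrac{1}{2N}\sum_k\log(1+d_k/\lambda_k)\to 0$ as $\alpha\to 0$ gives~(ii) exactly as you outline. (Your justification of the first inequality via the truncated allocation $\min(d_k,\lambda_k)$ and monotonicity of $R(\cdot)$ is the right way to handle the regime where some $d_k>\lambda_k$.)

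By contrast, the paper proves~(i) operationally rather than algebraically: it fixes the $R^{\perp}$-optimal pair $(X,X+Z^{\star})$, then post-multiplies by the Wiener matrix $\bW^{\star}=(\bK_X+\bK_{Z^{\star}})^{-1}\bK_X$, observes that this leaves $\bar I(X;\cdot)$ unchanged while strictly reducing MSE (since $\alpha/2>\hksqrt{\lambda_k^2+\lambda_k\alpha}-\lambda_k$), and hence $D(R)\le D'(R)<D^{\perp}(R)$, which by monotonicity gives~(i). For~(ii) the paper bounds $R^{\perp}(D)-R(D)$ above by $R^{\perp}(D)-\tfrac{1}{2N}\sum_k\log(\lambda_k/D)$ (valid for small $D$), expands via~\eqref{eq:RperpD_vec}--\eqref{eq:D_of_alpha_vector}, and lets $\alpha\to 0$. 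Your per-mode decomposition is more elementary and yields a cleaner gap bound valid for all $D$; the paper's Wiener-filter argument is more conceptual and explains \emph{why} the uncorrelation constraint costs rate (post-filtering recovers it). Both are short once the algebra is right.
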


\begin{proof}
We present here only the proof for the case of Gaussian random vectors. 
The proof for Gaussian stationary processes proceeds in an analogous fashion.
Recall that for a Gaussian random vector $X$ having positive covariance matrix with eigenvalues $\setFrTok{\lambda_{k}}{1}{N}$ one has that $R(D)\geq \frac{1}{2N}\sumfromto{k=1}{N}\log(\lambda_{k}/D)$, with equality iff $D\leq \min_{k}\setFrTok{\lambda_{k}}{1}{N}$.
As a consequence,
%
\begin{align}
 R^{\perp}(D) &- R(D) 
%
\overset{\phantom{(a)}}{\leq}
R^{\perp}(D) - \frac{1}{2N}\Sumfromto{k=1}{N}\log(\lambda_{k}/ D) 
\label{eq:substituteme}
\\
&\overset{(a)}{=} 
\frac{1}{2N}\Sumfromto{k=1}{N}\log\left[\frac{\left(\hksqrt{\lambda_{k} + \alpha } + \hksqrt{\lambda_{k}}\,\right)^{2}  }{\alpha}
\left(\frac{1}{2N\lambda_{k}}\Sumfromto{i=1}{N} \hksqrt{\lambda_{i}^{2} + \lambda_{i}\alpha} - \lambda_{i}\right)\right]\nonumber\\
&\overset{\hphantom{(a)}}{=}
\frac{1}{N}
\Sumfromto{k=1}{N}\log\left[\frac{\hksqrt{\lambda_{k} + \alpha } + \hksqrt{\lambda_{k}}  }{\hksqrt{\lambda_{k}}}\right]+
\frac{1}{2}\log\left[\frac{1}{2N}\Sumfromto{i=1}{N} \frac{\lambda_{i}}{\hksqrt{\lambda_{i}^{2} + \lambda_{i}\alpha} + \lambda_{i}}\right].
\label{eq:R_menos_R_util}
%
%
\end{align}
Equality $(a)$ is obtained by substituting~\eqref{eq:RperpD_vec} and~\eqref{eq:D_of_alpha_vector} into the right hand side of~\eqref{eq:substituteme}.
The validity of~\eqref{eq:limR_menus_R}-$(ii)$ then follows directly by taking the limit of the right hand side of~\eqref{eq:R_menos_R_util} as $\alpha\to 0 $.
In order to prove~\eqref{eq:limR_menus_R}-$(i)$, we will show that $D^{\perp}(R) > D(R)$, where the function $D^{\perp}(\cdot)$ is the inverse of $R^{\perp}(\cdot)$ and $D(R)$ is Shannon's distortion-rate function.
For this purpose, consider the random vector $Y' \eq \bW(X+Z^{\star})$, where $\bW\in\Rl^{N\times N}$ is a symmetric, positive-definite matrix, and $Z^{\star}$ is  
as in Theorem~\ref{thm:RD_Kx}-$(iii)$.
Notice that $Y'-X$ and $X$ are \emph{not} uncorrelated unless $\bW=\bI$.
The mutual information per dimension between $Y'$ and $X$ is given by
$
\bar{I}(X;Y')=\bar{h}(\bW Y) - \bar{h}(\bW Y | X ) = \bar{h}(\bW Y) - \bar{h}(\bW X +\bW Z^{\star} | X ) 
=\bar{h}(\bW Y) - \bar{h}(\bW Z^{\star} ) = \bar{h}(Y) - \bar{h}(Z^{\star}) = \bar{I}(X;Y) 
$.
Thus, for any positive definite $\bW$, $\bar{I}(X;Y')=R^{\perp}(D)$.
We next show that for any $D$ (and corresponding $\bK_{Z^{\star}}$), choosing an optimal matrix $\bW$  
yields a $Y'$ whose distortion $\frac{1}{N}\tr(\bK_{Y'-X})$ is strictly smaller than $D$.
It is easy to show that $ \tr(\bK_{Y'-X})$ is minimized by choosing
$
 \bW = \bW^{\star}
$%
, where $\bW^{\star}\eq (\bK_{X} + \bK_{Z^{\star}})^{-1} \bK_{X}$ is the Wiener filter (matrix) for $X+Z^{\star}$.
From this equation, we define 
the function $D'(R)\eq \frac{1}{N} \tr(\bK_{\bW^{\star}(X+Z^{\star})-X})$, describing the distortion associated with
$Y'=\bW^{\star}(X+Z^{\star})$, with the covariance matrix of $Z^{\star}$ as in~\eqref{eq:Kz_star_def} when $R^{\perp}=R$. 
Since $D(R) \leq D'(R)$, we obtain, from applying Lemma~\ref{lem:Sylvester}, and after some algebraic manipulation,  that
%
\begin{align}
 D^{\perp}(R) /  D(R) 
&\geq  D^{\perp}(R) / D'(R) 
= \tr[ \bK_{Z^{\star}}] / \tr[ (\bK_{X}\! +\! \bK_{Z^{\star}}\!)^{-1} \bK_{Z^{\star}}\bK_{X} ] \nonumber \\
& \!\!\!\!\!\!\!\!\!\!\!\!\!\!\!\!\!\!\!\!= \frac{1}{2} 
\left(
\sumfromto{k=1}{N}
{\Big(\hksqrt{\lambda_{k}^{2} + \lambda_{k} \alpha} -  \lambda_{k}\Big)}\alpha 
\right)   
\Big /
\sumfromto{k=1}{N}\Big(\hksqrt{\lambda_{k}^{2} + \lambda_{k} \alpha} -  \lambda_{k} \Big)^{2} 
> 1,\label{eq:srict_ineq}
\end{align}
where the last inequality follows from the fact that $\alpha/2 > \hksqrt{\lambda_{k}^{2} + \lambda_{k} \alpha} -  \lambda_{k}, \, \forall \alpha >0$.
Finally, the fact that both $R(D)$ and $R^{\perp}(D)$ are monotonically decreasing functions, together with~\eqref{eq:srict_ineq},  implies~\eqref{eq:limR_menus_R}-$(i)$.
This completes the proof.
 \end{proof}

The summation term in~\eqref{eq:R_menos_R_util} describes exactly the rate loss $R_{L}(D)\eq R^{\perp}(D) -R(D)$ for all $D\leq \min_{k}\setFrTok{\lambda_{k}}{1}{N}$.
For $D$ within this range, it can be shown, using Chebyshev's sum inequality, that $\partial R_{L}(D)/\partial \alpha >0$, which implies that $R_{L}(D)$ increases monotonically with increasing $D$.
On the other hand, for all $D > 0$, the ratio $D^{\perp}(R)/ D(R)$ can be lower bounded by~\eqref{eq:srict_ineq}.
It can be shown that this bound increases with $\alpha$ (and thus with $D$ as well), tending to $\infty$ as $\alpha\to\infty$, which is in agreement with Theorem~\ref{thm:convexity}.
%
%
%
\section{Concluding Remarks}
In this work we have completely characterized $R^{\perp}(D)$, the quadratic Gaussian rate-distortion function subject to the constraint that the end-to-end distortion be uncorrelated with the source.
We have further proved that this function shares convexity and monotonicity with Shannon's rate-distortion function $R(D)$, but $R^{\perp}(D)$ is positively bounded away from the latter, converging  to $R(D)$ only in the limit as the distortion tends to zero.
We showed that the uncorrelation constraint causes the distortion to unboundedly grow  as the rate tends to zero.
We also discussed the achievability of $R^{\perp}(D)$ for random vectors and stationary random processes through transform coding and feedback quantization architectures.
\section{Appendix}
%
%
%
\begin{lem}[Adapted from Corollary 11.1.2 in~\cite{golvan96}]\label{lem:Sylvester}
 Let 
$$\bA= \bQ \, \diag(\lambda_{1},\ldots,\lambda_{N}) \,\bQ^{-1} = \sumfromto{k=1}{N}\lambda_{k}\bQ_{:,k}\bQ^{-1}_{k,:},$$
with $\bQ\in\mathbb{C}^{N\times N}$, and where $\bQ_{:,k}$ and $\bQ^{-1}_{k,:}$ denote the $k$-th column and the $k$-th row of $\bQ$ and $\bQ^{-1}$, respectively. If $f(\cdot)$ is analytic in a neighbourhood around each $\lambda_{k}$, for $k=1,\ldots,N$, then
%
\begin{align}
 f(A) = \bQ\, \diag(f(\lambda_{1}),\ldots,f(\lambda_{N})) \,\bQ^{-1} = \sumfromto{k=1}{N}f(\lambda_{k})\bQ_{:,k}\bQ^{-1}_{k,:}.
\end{align}
%
\end{lem}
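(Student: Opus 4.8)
The plan is to reduce the statement to the polynomial case and then invoke the standard characterization of a function of a diagonalizable matrix as a polynomial in that matrix.

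First I would dispatch the polynomial case. With $\boldsymbol{\Lambda}\eq\diag(\lambda_{1},\dots,\lambda_{N})$, repeated use of $\bQ^{-1}\bQ=\bI$ gives $\bA^{j}=\bQ\boldsymbol{\Lambda}^{j}\bQ^{-1}$ for every $j\in\Nl$, and $\boldsymbol{\Lambda}^{j}=\diag(\lambda_{1}^{j},\dots,\lambda_{N}^{j})$; by linearity, $p(\bA)=\bQ\,\diag(p(\lambda_{1}),\dots,p(\lambda_{N}))\,\bQ^{-1}$ for any polynomial $p$. Expanding the triple product column by column turns the right-hand side into $\sumfromto{k=1}{N}p(\lambda_{k})\bQ_{:,k}\bQ^{-1}_{k,:}$, so the two expressions in the statement coincide whenever $f$ is a polynomial.

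Next I would pass from polynomials to analytic $f$. Here I would use the fact that $f(\bA)$ is by definition $p(\bA)$, where $p$ is the Hermite polynomial interpolating $f$ on the spectrum of $\bA$ (with derivative conditions at repeated eigenvalues); equivalently $f(\bA)=\frac{1}{2\pi i}\oint_{\Gamma} f(z)(z\bI-\bA)^{-1}\,dz$ for a contour $\Gamma$ enclosing all the $\lambda_{k}$, these two notions being known to agree. Since $\bA$ is diagonalizable its minimal polynomial has simple roots, so $p$ only has to satisfy $p(\lambda_{k})=f(\lambda_{k})$ at the distinct eigenvalues --- no derivatives of $f$ enter. Feeding this into the polynomial step gives
\[
f(\bA)=p(\bA)=\bQ\,\diag(f(\lambda_{1}),\dots,f(\lambda_{N}))\,\bQ^{-1}=\sumfromto{k=1}{N}f(\lambda_{k})\bQ_{:,k}\bQ^{-1}_{k,:},
\]
which is the claim.

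I expect the main (indeed the only) subtlety to be in this second step: one must be sure that $f(\bA)$ is well defined, independent of the chosen analytic continuation or contour, and --- crucially for the clean formula --- that for diagonalizable $\bA$ it depends only on the scalars $f(\lambda_{k})$ and not on derivatives of $f$. This is precisely the standard matrix-function machinery of~\cite{golvan96}, so it suffices to cite it; the rest is routine linear algebra. In every use of this lemma in the paper $\bA$ is a positive-definite covariance matrix, hence the $\lambda_{k}$ are positive reals and the functions actually applied ($x\mapsto\hksqrt{x}$, $x\mapsto\hksqrt{x^{2}+\alpha x}$, $x\mapsto\log x$) are analytic on a neighbourhood of $(0,\infty)$, so the hypothesis holds automatically.
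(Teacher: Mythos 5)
Your proof is correct. The paper offers no proof of this lemma at all --- it is stated as adapted from Corollary 11.1.2 of Golub and Van Loan and simply cited --- and your argument (the trivial polynomial case $\bA^{j}=\bQ\boldsymbol{\Lambda}^{j}\bQ^{-1}$, followed by the Hermite-interpolation/Cauchy-integral definition of $f(\bA)$, which for diagonalizable $\bA$ depends only on the values $f(\lambda_{k})$) is exactly the standard derivation underlying that citation.
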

\begin{lem}[
Theorem 4.5.2 in~\cite{berger71}
]\label{lem:grensze}
Let $\bA_{\infty}$ be an infinite Toeplitz matrix with entry $a_{k}\in\mathbb{R}$ on the $k$-th diagonal.
Then the eigenvalues of $\bA_{\infty}$ are contained in the interval $m\leq \lambda \leq M$, where $m$ and $M$ denote the essential infimum and supremum, respectively, of the function
$
 	f(\w) \eq \sumfromto{k=-\infty}{\infty} a_{k}\expo{-jk\w}
$.
Moreover, if both $m$ and $M$ are finite and $G(\lambda)$ is any continuous function of $\lambda\in[m,M]$, then
%
%
\begin{align}
 \lim_{N\to\infty} \frac{1}{N}\sumfromto{k=1}{N} G(\lambda_{k}^{(N)}) = \frac{1}{2\pi}\intfromto{-\pi}{\pi}G[f(\w)]d\w,
\end{align}
where the $\lambda^{(N)}$ are the eigenvalues of the sub-matrix $\bA^{(N)}\in\Rl^{N\times N}$ of $\bA_{\infty}$ centered about the main diagonal of $\bA_{\infty}$.
\end{lem}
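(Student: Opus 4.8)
\textbf{Proof plan for Lemma~\ref{lem:grensze}.}
The plan is to treat the two assertions separately, since the eigenvalue-containment statement is elementary while the limiting-average formula is the classical Szeg\H{o} distribution theorem. First I would establish the containment $m\leq\lambda\leq M$. For any unit vector $\vx\in\Rl^{N}$, the Rayleigh quotient of the finite section $\bA^{(N)}$ equals $\vx^{T}\bA^{(N)}\vx=\frac{1}{2\pi}\intfromto{-\pi}{\pi}f(\w)\,\abs{p_{\vx}(\w)}^{2}d\w$, where $p_{\vx}(\w)\eq\sumfromto{n=1}{N}x_{n}\expo{-jn\w}$ is the associated trigonometric polynomial; this identity is just the Fourier-coefficient description of a Toeplitz form. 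Because $\frac{1}{2\pi}\intfromto{-\pi}{\pi}\abs{p_{\vx}(\w)}^{2}d\w=\Norml{\vx}^{2}=1$ and $m\leq f(\w)\leq M$ almost everywhere, the Rayleigh quotient lies in $[m,M]$, and since this holds for the infinite matrix's sections of every size, every eigenvalue of $\bA_{\infty}$ (and of each $\bA^{(N)}$) lies in $[m,M]$ as well.

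For the averaging formula, I would use the standard moment/Weierstrass argument. The key observation is that for each fixed integer $p\geq0$ the normalized trace of a power behaves nicely: $\frac{1}{N}\tr\bigl[(\bA^{(N)})^{p}\bigr]=\frac{1}{N}\sumfromto{k=1}{N}\bigl(\lambda_{k}^{(N)}\bigr)^{p}$, and a direct computation with the Toeplitz entries shows $\frac{1}{N}\tr\bigl[(\bA^{(N)})^{p}\bigr]\to\frac{1}{2\pi}\intfromto{-\pi}{\pi}f(\w)^{p}d\w$ as $N\to\infty$. Indeed, expanding the matrix product, the contribution that fails to match $\intfromto{-\pi}{\pi}f^{p}$ comes only from index tuples near the boundary $\{1,N\}$, and the number of such tuples is $O(1)$ relative to $N$; this is where one exploits that the $a_{k}$ are absolutely summable (or at least that $f$ is essentially bounded, which suffices here since $m,M$ are finite). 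Hence for every polynomial $P$, $\frac{1}{N}\sumfromto{k=1}{N}P\bigl(\lambda_{k}^{(N)}\bigr)\to\frac{1}{2\pi}\intfromto{-\pi}{\pi}P[f(\w)]d\w$. Since all eigenvalues lie in the compact interval $[m,M]$, the Weierstrass approximation theorem lets me approximate the continuous function $G$ uniformly on $[m,M]$ by polynomials, and a routine $\varepsilon/3$ estimate (using that both the empirical average and the integral are bounded linear functionals with norm controlled by $\sup_{[m,M]}\abs{G-P}$) upgrades the polynomial convergence to convergence for $G$, giving $\lim_{N\to\infty}\frac{1}{N}\sumfromto{k=1}{N}G(\lambda_{k}^{(N)})=\frac{1}{2\pi}\intfromto{-\pi}{\pi}G[f(\w)]d\w$.

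I expect the main obstacle to be the boundary-error estimate in the moment computation: one must argue carefully that when $\tr[(\bA^{(N)})^{p}]$ is written as a sum over closed walks of length $p$ in the index set $\{1,\dots,N\}$ with weights $a_{i_{1}-i_{2}}a_{i_{2}-i_{3}}\cdots a_{i_{p}-i_{1}}$, the ``wrap-around defect'' relative to the circulant/convolution sum is negligible after dividing by $N$. For absolutely summable $\{a_{k}\}$ this is a standard but slightly fiddly tail bound; since the statement is quoted from \cite{berger71}, I would present the Rayleigh-quotient containment argument in full and then cite that reference for the Szeg\H{o} limit, sketching only the moment-method reduction above so the reader sees why the hypotheses $m,M$ finite are exactly what is needed.
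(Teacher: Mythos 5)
The paper does not prove this lemma at all: it is imported verbatim as Theorem~4.5.2 of Berger's book and used as a black box in the proof of Theorem~2, so there is no in-paper argument to compare against. Your sketch is the classical proof of the Szeg\H{o}/Toeplitz distribution theorem and is essentially correct: the Rayleigh-quotient identity $\vx^{T}\bA^{(N)}\vx=\frac{1}{2\pi}\intfromto{-\pi}{\pi}f(\w)\abs{p_{\vx}(\w)}^{2}d\w$ does give the containment $m\leq\lambda_{k}^{(N)}\leq M$ (and extends to $\bA_{\infty}$ on $\ell^{2}$ by density of finitely supported vectors), and the moment method plus Weierstrass approximation on the compact interval $[m,M]$ is exactly how Berger and Grenander--Szeg\H{o} obtain the limiting average. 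The one point where your hedging matters: the walk-counting estimate showing $\frac{1}{N}\tr[(\bA^{(N)})^{p}]\to\frac{1}{2\pi}\intfromto{-\pi}{\pi}f(\w)^{p}d\w$ with only an $O(1/N)$ boundary defect genuinely uses absolute summability of $\set{a_{k}}$ (or at least decay); mere essential boundedness of $f$ does \emph{not} by itself make that particular tail bound work, and for bounded symbols one instead approximates $f$ in $L^{1}$ by trigonometric polynomials and uses the uniform spectral bound $[m,M]$ to control the error. Since you explicitly identify this as the delicate step and propose to cite \cite{berger71} for it, the plan is sound; if you were to write it out in full you would need to either assume $\sumfromto{k=-\infty}{\infty}\abs{a_{k}}<\infty$ (which holds in the paper's application, where the symbol is a power spectral density with summable autocorrelation) or supply the extra approximation argument.
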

%


{
\begin{spacing}{0.2}
	\bibliographystyle{\BibPath/IEEEtran}

\begin{thebibliography}{10}
\providecommand{\url}[1]{#1}
\def\UrlFont{\rmfamily}
\providecommand{\newblock}{\relax}
\providecommand{\bibinfo}[2]{#2}
\providecommand\BIBentrySTDinterwordspacing{\spaceskip=0pt\relax}
\providecommand\BIBentryALTinterwordstretchfactor{4}
\providecommand\BIBentryALTinterwordspacing{\spaceskip=\fontdimen2\font plus
\BIBentryALTinterwordstretchfactor\fontdimen3\font minus
  \fontdimen4\font\relax}
\providecommand\BIBforeignlanguage[2]{{%
\expandafter\ifx\csname l@#1\endcsname\relax
\typeout{** WARNING: IEEEtran.bst: No hyphenation pattern has been}%
\typeout{** loaded for the language `#1'. Using the pattern for}%
\typeout{** the default language instead.}%
\else
\language=\csname l@#1\endcsname
\fi
#2}}

\bibitem{gergra01}
A.~Gersho and R.~M. Gray, \emph{Vector quantization and signal
  compression}.\hskip 1em plus 0.5em minus 0.4em\relax Kluwer {A}cademic
  {P}ublishers, 2001.

\bibitem{goyal-01}
V.~K. Goyal, ``Theoretical foundations of transform coding,'' \emph{IEEE Signal
  Processing Magazine}, vol.~18, no.~5, pp. 9 -- 21, September 2001.

\bibitem{marneu05}
D.~Marco and D.~L. Neuhoff, ``The validity of the additive noise model for
  uniform scalar quantizers,'' \emph{{IEEE} Trans. Inform. Theory}, vol.~51,
  no.~5, pp. 1739--1755, May 2005.

\bibitem{zamfed96}
R.~Zamir and M.~Feder, ``On lattice quantization noise,'' \emph{{IEEE} Trans.
  Inform. Theory}, vol.~42, no.~4, pp. 1152--1159, July 1996.

\bibitem{grasto93}
R.~M. Gray and T.~G. Stockham, ``Dithered quantizers,'' \emph{{IEEE} Trans.
  Inform. Theory}, vol.~39, no.~3, pp. 805--812, May 1993.

\bibitem{vaidya93}
P.~P. Vaidyanathan, \emph{Multirate Systems and Filter Banks}.\hskip 1em plus
  0.5em minus 0.4em\relax Englewood Cliffs, New Jersey: Prentice-Hall, 1993.

\bibitem{norsch97}
S.~R. Norsworthy, R.~Schreier, and {G.~C.~Temes}, Eds., \emph{Delta--Sigma Data
  Converters: Theory, Design and Simulation}.\hskip 1em plus 0.5em minus
  0.4em\relax Piscataway, N.J.: IEEE Press, 1997.

\bibitem{dersil08}
M.~S. Derpich, E.~Silva, D.~E. Quevedo, and G.~C. Goodwin, ``On optimal perfect
  reconstruction feedback quantizers,'' submitted to IEEE Trans. Signal Proc..

\bibitem{derost08b}
M.~Derpich, J.~{\O}stergaard, and D.~Quevedo, ``Achieving the quadratic
  {G}aussian rate-distortion function for source uncorrelated distortions,''
  submitted to ISIT 2008 (available at http://arXiv.org).

\bibitem{huang-63}
J.~J.~Y. Huang and P.~M. Schultheiss, ``Block quantization of correlated
  {G}aussian random variables,'' \emph{IEEE Trans. Commun. Syst.}, vol.
  {COM}-11, pp. 289 -- 296, September 1963.

\bibitem{goyal-00}
V.~K. Goyal, J.~Zhuang, and M.~Vetterli, ``Transform coding with backward
  adaptive updates,'' \emph{{IEEE} Trans. Inform. Theory}, no.~4, pp. 1623 --
  1633, July 2000.

\bibitem{digcov01}
S.~N. Diggavi and T.~M. Cover, ``The worst additive noise under a covariance
  constraint,'' \emph{{IEEE} Trans. Inform. Theory}, vol.~47, pp. 3072--3081,
  2001.

\bibitem{jaynol84}
N.~Jayant and P.~Noll, \emph{Digital coding of waveforms. {P}rinciples and
  approaches to speech and video.}\hskip 1em plus 0.5em minus 0.4em\relax
  Englewood Cliffs, NJ: Prentice Hall, 1984.

\bibitem{horjoh85}
R.~A. Horn and C.~R. Johnson, \emph{Matrix Analysis}.\hskip 1em plus 0.5em
  minus 0.4em\relax Cambridge, UK: Cambridge University Press, 1985.

\bibitem{golvan96}
G.~H. Golub and C.~F. {Van Loan}, \emph{Matrix computations}, 3rd~ed.\hskip 1em
  plus 0.5em minus 0.4em\relax Baltimore, Maryland: The John Hopkins University
  Press, 1996.

\bibitem{berger71}
T.~Berger, \emph{Rate distortion theory: a mathematical basis for data
  compression}.\hskip 1em plus 0.5em minus 0.4em\relax Englewood Cliffs, N.J.:
  Prentice-Hall, 1971.

\end{thebibliography}
\end{spacing}
}

\end{document}